\newtheorem{Lemma}{\it Lemma}
\newtheorem{Corollary}{\it Corollary}
\newtheorem{Definition}{\it Definition}
\newtheorem{Remark}{\it Remark}
\newtheorem{Theorem}{\it Theorem}
\def\bq{{\mathbf{q}}}
\def\bw{{\mathbf{w}}}
\def\bW{{\mathbf{W}}}
\def\bV{{\mathbf{V}}}
\def\bs{{\mathbf{s}}}
\def\bx{{\mathbf{x}}}
\def\by{{\mathbf{y}}}
\def\bz{{\mathbf{z}}}
\def\bu{{\mathbf{u}}}
\def\bv{{\mathbf{v}}}
\def\bY{{\mathbf{Y}}}
\def\bQ{{\mathbf{Q}}}
\def\bS{{\mathbf{S}}}
\def\bU{{\mathbf{U}}}
\def\sptfirst{\text{sp1}}
\def\spttwo{\text{sp2}}
\def\sim{\text{sim}}
\def\suc{\text{suc}}
\def\inn{\text{in}}
\begin{document}
\onecolumn
\title{\huge On the Achievable Rate Regions for Interference Channels with Degraded Message Sets$^{\dagger}$}
\author{\authorblockN{Jinhua Jiang and Yan Xin} \\
\authorblockA{Department of Electrical and Computer Engineering \\
National University of Singapore, Singapore 117576 \\
Email: \{ jinhua.jiang, elexy \}@nus.edu.sg}
\thanks{$^{\dagger}$ The work is supported by the
National University of Singapore (NUS) under start-up grants
R-263-000-314-101 and R-263-000-314-112 and by a NUS Research
Scholarship. The correspondence author of the paper is Dr. Yan Xin
(tel. no. +65 6516-5513 and fax no. +65 6779-1103).}}
\markboth{IEEE Transactions on Information Theory (submitted)}
{Jiang and Xin}
\renewcommand{\thepage}{}
\markboth{}{} \maketitle
\begin{abstract}
The interference channel with degraded message sets (IC-DMS)
refers to a communication model in which two senders attempt to
communicate with their respective receivers simultaneously through
a common medium, and one of the senders has {\it complete} and
{\it a priori} (non-causal) knowledge about the message being
transmitted by the other. A coding scheme that collectively has
advantages of cooperative coding, collaborative coding, and dirty
paper coding, is developed for such a channel. With resorting to
this coding scheme, achievable rate regions of the IC-DMS in both
discrete memoryless and Gaussian cases are derived, which, in
general, include several previously known rate regions. Numerical
examples for the Gaussian case demonstrate that in the {\it
high-interference-gain} regime, the derived achievable rate
regions offer considerable improvements over these existing
results.
\end{abstract}
\begin{keywords}
Cognitive radio, cooperative communication, degrade message sets,
dirty paper coding, Gel'fand-Pinsker coding, interference
channels, superposition coding.
\end{keywords}

\newpage
\pagenumbering{arabic} \markboth{IEEE Transactions on Information
Theory (submitted)} {Jiang and Xin}
\section{Introduction}

The interference channel with degraded message sets (IC-DMS)
refers a communication model in which two senders attempt to
communicate with their respective receivers simultaneously through
a common medium, and one of the senders has {\it complete} and
{\it a priori} (non-causal) knowledge about the message being
transmitted by the other. Such a model generically characterizes
some realistic communication scenarios taking place in cognitive
radio channels \cite{Tarokh06:ic_dms_cog,jovicic06:cog_ICDMS} or
in wireless sensor networks over a correlated field
\cite{wuwei06_icdms,Kramer06_UCSDworkshop:IFC_uniDcoop}, which we
illustrate in Figs. \ref{fig_scenarios}(a) and
\ref{fig_scenarios}(b).

\begin{figure}[b]
\centering \psfrag{w1}{$w_1$} \psfrag{w2}{$w_2$}
\includegraphics{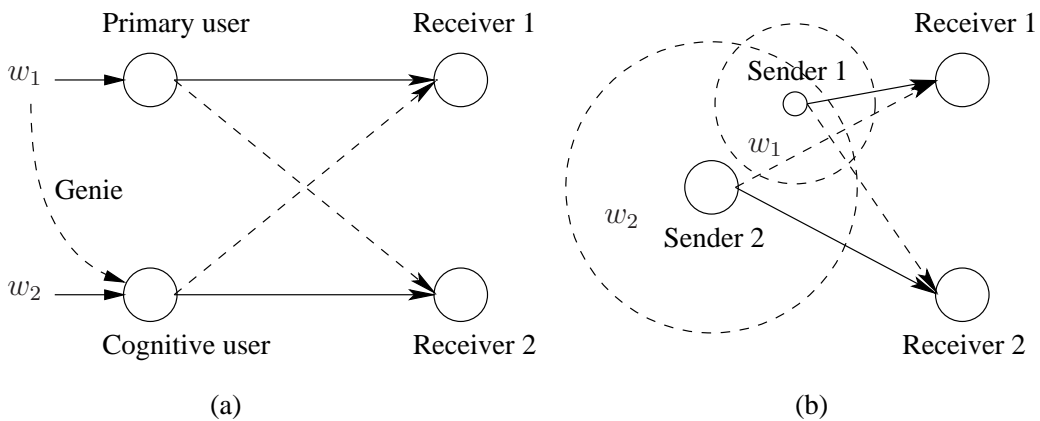}
\caption{(a) A genie-aided cognitive radio channel
\cite{Tarokh06:ic_dms_cog}, in which the Genie informs the
cognitive user of what the primary user will transmit; (b) A
four-node wireless sensor network \cite{wuwei06_icdms}, in which
sender 2 senses a larger area such that it knows what information
sender 1 obtains.} \label{fig_scenarios}
\end{figure}

%
From an information-theoretic perspective, the IC-DMS have been
investigated in
\cite{Tarokh06:ic_dms_cog,Kramer06_UCSDworkshop:IFC_uniDcoop,jovicic06:cog_ICDMS,wuwei06_icdms}.
Specifically, several achievable rate results have been obtained
in
\cite{Tarokh06:ic_dms_cog,Kramer06_UCSDworkshop:IFC_uniDcoop,jovicic06:cog_ICDMS,wuwei06_icdms},
and the capacity regions for two special cases have been
characterized in
\cite{Kramer06_UCSDworkshop:IFC_uniDcoop,jovicic06:cog_ICDMS,wuwei06_icdms}.
The main achievable rate region in \cite{Tarokh06:ic_dms_cog} was
obtained by incorporating the Gel'fand-Pinsker coding
\cite{gelfand_pinsker80:channel_random_param} into the well-known
coding scheme applied to the interference channel (IC)
\cite{Carleial78:IFC,Han81:IFC}. In this coding scheme, each of
the two senders splits its message into two sub-messages, and
allows its non-pairing receiver to decode one of the sub-messages.
Knowing the two sub-messages and the corresponding codewords which
sender 1 wishes to transmit, sender 2 applies the Gel'fand-Pinsker
coding to encode its own sub-messages by treating the codewords of
sender 1 as known interferences. It has been also shown in
\cite[Corollary 2]{Tarokh06:ic_dms_cog} that, an improved
achievable rate region can be attained by time-sharing between the
early derived rate region and a so called fully-cooperative rate
point achieved by letting sender 2 use all its power to transmit
sender 1' messages. A different coding scheme was adopted in
\cite{jovicic06:cog_ICDMS} and \cite{wuwei06_icdms}, in which
neither of the senders splits its message into two sub-messages,
and receiver 2 does not decode any transmitted information from
sender 1. Since sender 2 knows what sender 1 wishes to transmit,
sender 2 is allowed to: 1) apply the Gel'fand-Pinsker coding to
encode its own message; and 2) partially cooperate with sender 1
using superposition coding. It has been proven in
\cite{jovicic06:cog_ICDMS, wuwei06_icdms} that, this is the
capacity-achieving scheme for the Gaussian IC-DMS in the {\it
low-interference-gain} regime, in which the normalized link gain
between sender 2 and receiver 1 is less than or equal to $1$.
\begin{figure}[t]
\centering \psfrag{w1}{$w_1$} \psfrag{w2}{$w_2$}
\includegraphics{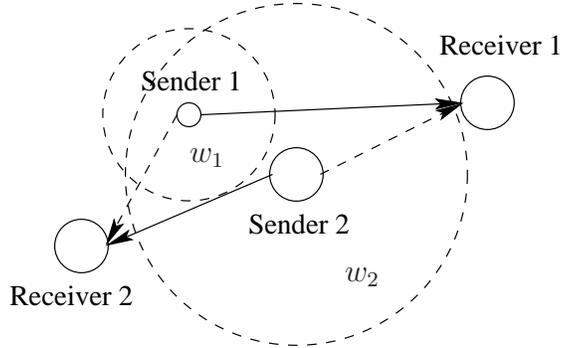}
\caption{An interference channel with degraded message sets in
which sender 2 is close to receiver 1.} \label{fig_scenario_our}
\end{figure}

However, in practice, due to the mobility of the users or random
distributions of the sensors, sender 2 may be geographically
located near to receiver 1, as illustrated in Fig.
\ref{fig_scenario_our}. It is likely, in such a situation, that
the Gaussian IC-DMS is in the {\it high-interference-gain} regime,
in which the normalized link gain between sender 2 and receiver 1
is greater than 1. In fact, the findings in this paper reveal that
the achievable rate region, which was proven to be the capacity
region in the low-interference-gain regime in
\cite{jovicic06:cog_ICDMS} and \cite{wuwei06_icdms}, is {\it
strictly} non-optimal for the Gaussian IC-DMS in the
high-interference-gain regime.

In this paper, we develop a new coding scheme for the IC-DMS to
improve existing achievable rate regions. Our coding scheme
differs from one proposed in
\cite{jovicic06:cog_ICDMS,wuwei06_icdms} in the way that, sender 2
splits its message into two sub-messages, and encodes both
sub-messages using Gel'fand-Pinsker coding. Moreover, receiver 1
is required to jointly decode the message from sender 1 and one
sub-message from sender 2. With this coding scheme, we derive our
main achievable rate region for the discrete memoryless case. For
comparison purpose, we compromise either the coding flexibility
(fixing an auxiliary random variable as a constant), or the
advantage of simultaneous decoding \cite{Han81:IFC}, to obtain two
subregions of the main achievable rate region. The obtained
subregions are shown to either include or be the same as the
existing ones. We further extend the obtained regions from the
discrete memoryless case to the Gaussian case, and show by
numerical examples that our Gaussian achievable rate results
strictly improve the existing ones in the high-interference-gain
regime.

The rest of the paper is organized as follows. In Section
\ref{channel_model}, we introduce the channel model of the IC-DMS,
and the related terminologies. In Section
\ref{section_region_general_MOST}, we present the main achievable
result for the discrete memoryless case with a detailed proof. In
Section \ref{section_region_general}, we derive two subregions of
the main achievable rate region, and we show that the derived
subregions include several existing results as special cases.
Lastly, in Section \ref{section_region_Gaussian}, we extend our
rate regions from the discrete memoryless case to the Gaussian
case, and compare them with the existing results.

{\it Notations:} Random variables and their realizations are
denoted by upper case letters and lower case letters respectively,
e.g., $X$ and $x$. Bold lower (upper) case letters are used to
denote vectors (matrices), e.g., $\mathbf{x}$ and
$\mathbf{\Sigma}$. Calligraphic fonts are used to denote sets,
e.g., $\mathcal{X}$ and $\mathcal{R}$.

\section{The Channel Model}\label{channel_model}

Consider the IC-DMS (also termed as the genie-aided cognitive
radio channel in \cite{Tarokh06:ic_dms_cog}) depicted in Fig.
\ref{fig_ICDMS}, in which sender 1 wishes to transmit a message
(or message index), $w_1 \in \mathcal{M}_1 := \{1,...,M_1\}$, to
receiver 1 and sender 2 wishes to transmit its message, $w_2
\in\mathcal{M}_2 := \{1,...,M_2\}$, to receiver 2. Typically, this
{\it discrete memoryless} IC-DMS is described by a tuple
$(\mathcal{X}_1,\mathcal{X}_2,\mathcal{Y}_1,\mathcal{Y}_2,p(y_1,y_2|x_1,x_2))$,
where $\mathcal{X}_1$ and $\mathcal{X}_2$ are the channel input
alphabets, $\mathcal{Y}_1$ and $\mathcal{Y}_2$ are the channel
output alphabets, and $p(y_1,y_2|x_1,x_2)$ denotes the conditional
probability of $(y_1,y_2) \in \mathcal{Y}_1\times\mathcal{Y}_2$
given $(x_1,x_2) \in \mathcal{X}_1\times\mathcal{X}_2$. The
channel is discrete memoryless in the sense that
\begin{align}
  p(y_{1,t},y_{2,t}|x_{1,t},x_{2,t},x_{1,{t-1}},x_{2,{t-1}},...) =
  p(y_{1,t},y_{2,t}|x_{1,t},x_{2,t}),
\end{align}
for every discrete time instant $t$ in a synchronous transmission.
In terms of the channel input-output relationship, the IC-DMS is
the same as the IC. However, in the IC-DMS, sender 2 is able to
noncausally obtain the knowledge of the message $w_1$, which will
be transmitted from sender 1. This is the key difference between
the IC-DMS and IC in terms of the information flow.  We next
present the following standard definitions with regard to the
existence of codes and achievable rates for the discrete
memoryless IC-DMS channel.

\begin{figure}[t]
\centering \psfrag{w1}{$w_1$} \psfrag{w2}{$w_2$}
\psfrag{x1}{$x^n_1$} \psfrag{x2}{$x^n_2$} \psfrag{y1}{$y^n_1$}
\psfrag{y2}{$y^n_2$} \psfrag{w1'}{$\hat{w}_1$}
\psfrag{w2'}{$\hat{w}_2$} \psfrag{P}{$p(y_1,y_2|x_1,x_2)$}
\psfrag{f_1}{$f_1(\cdot)$} \psfrag{f_2}{$f_2(\cdot)$}
\psfrag{g_1}{$g_1(\cdot)$} \psfrag{g_2}{$g_2(\cdot)$}
\includegraphics{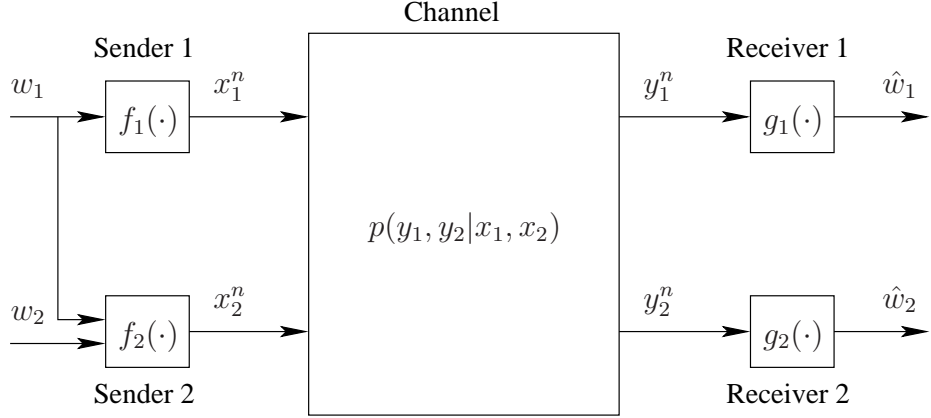}
\caption{An interference channel with degraded message sets.}
\label{fig_ICDMS}
\end{figure}

\begin{Definition}
An $(M_1, M_2, n, P_e)$ code exists for the discrete memoryless
IC-DMS, if and only if there exist two encoding functions
\[ f_1:
\mathcal{M}_1 \rightarrow \mathcal{X}^n_1, \quad f_2:
\mathcal{M}_1 \times \mathcal{M}_2 \rightarrow \mathcal{X}^n_2,
\]
and two decoding functions
\[
g_1: \mathcal{Y}^n_1 \rightarrow {\mathcal{M}}_1,\quad g_2:
\mathcal{Y}^n_2 \rightarrow {\mathcal{M}}_2,
\] such that
$\max\{P_{e,1}^{(n)}, P_{e,2}^{(n)}\} \leq P_e$, where
$P_{e,1}^{(n)}$ and  $P_{e,2}^{(n)}$ denote the respective average
probabilities of error at decoders $1$ and $2$, and are computed
as
\begin{align*}
&P^{(n)}_{e,1} = \frac{1}{M_1M_2}\sum_{w_1w_2} p(\hat{w}_1\neq w_1 |(w_1,w_2) \text{~were~sent}), \\
&P^{(n)}_{e,2} = \frac{1}{M_1M_2}\sum_{w_1w_2} p(\hat{w}_2\neq w_2
|(w_1,w_2) \text{~were~sent}).
\end{align*}
\end{Definition}
\vspace*{0.5cm}
\begin{Definition} A non-negative rate pair $(R_1, R_2)$ is achievable for the
IC-DMS, if for any given $0<P_e<1$ and any sufficiently large $n$,
there exists a $(2^{nR_1}, 2^{nR_2}, n, P_e)$ code for the
channel. The capacity region of the IC-DMS is the set of all the
achievable rate pairs for the channel, and an achievable rate
region is a subset of the capacity region.
\end{Definition}

It should be noted that from an information-theoretic standpoint,
the IC can not be simply treated as a special case of the IC-DMS
in the sense that the capacity region of the IC-DMS, if any, does
not imply a capacity region of the IC.

\section{An Achievable Rate Region for the
Discrete Memoryless IC-DMS}\label{section_region_general_MOST}

In this section, we present the main achievable rate region for
the discrete memoryless IC-DMS, which is the primary result in
this paper.

Consider auxiliary random variables $W$, $U$, $\tilde{U}$, $V$,
$\tilde{V}$ and a time-sharing random variable $Q$, defined on
arbitrary finite sets $\mathcal{W}$, $\mathcal{U}$,
$\tilde{\mathcal{U}}$, $\mathcal{V}$, $\tilde{\mathcal{V}}$ and
$\mathcal{Q}$ respectively. Let $\mathcal{P}$ denote the set of
all joint probability distributions $p(\cdot)$ that factor in the
form of
\begin{align}
  p(q,w,x_1,u, \tilde{u},v, \tilde{v},x_2,y_1,y_2) =&
  p(q)p(w,x_1|q)p(u,\tilde{u}|w,q)p(v,\tilde{v}|w,q)\nonumber\\
  &\cdot p(x_2|\tilde{u},\tilde{v},w,q)p(y_1,y_2|x_1,x_2), \label{joint_pdf_most}
\end{align}
where $w$, $u$, $\tilde{u}$, $v$, $\tilde{v}$, and $q$ are
realizations of random variables $W$, $U$, $\tilde{U}$, $V$,
$\tilde{V}$ and $Q$.

Let $\mathcal{R}(p)$ denote the set of all non-negative rate pairs
$(R_1, R_2)$ such that the following inequalities hold
simultaneously
\begin{align}
R_1 &\leq I(W;Y_1U|Q), \label{region_ineq1}\\
R_2 &\leq I(UV;Y_2|Q) - I(U;W|Q)-I(V;W|Q),\\
R_1 + R_2 &\leq I(UW;Y_1|Q)+ I(V;Y_2U|Q)-I(U;W|Q)- I (V;W|Q);
\label{region_ineq5}
\\
0& \leq I(UW;Y_1|Q) - I(U;W|Q) , \label{region_cons_1}\\
0&\leq I(U;Y_2V|Q) - I(U;W|Q) ,
\\
0&\leq I(V;Y_2U|Q) - I(V;W|Q) ,
\\
0&\leq I(UV;Y_2|Q) - I(U;W|Q)-I(V;W|Q)\label{region_cons_4},
\end{align}
for a given joint distribution $p(\cdot) \in \mathcal{P}$.

Let $\mathcal{C}$ denote the capacity region of the discrete
memoryless IC-DMS, and let
\[
\mathcal{R} = \bigcup_{p(\cdot)\in \mathcal{P}}\mathcal{R}(p).
\]

\begin{Theorem}
The region $\mathcal{R}$ is an achievable rate region for the
discrete memoryless IC-DMS, i.e., $\mathcal{R}\subseteq
\mathcal{C}$. \label{theorem_region_most}
\end{Theorem}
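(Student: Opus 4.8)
The plan is to establish achievability via a random-coding argument combining superposition coding, block coding need not arise here (single-shot), Gel'fand–Pinsker binning applied twice at sender 2, and joint (simultaneous) decoding at receiver 1. Fix a joint distribution $p(\cdot)\in\mathcal{P}$ as in \eqref{joint_pdf_most} and fix a time-sharing sequence $\bq$ drawn i.i.d.\ from $p(q)$, revealed to all parties. Split $w_2$ into two independent sub-messages $w_{20}\in\{1,\dots,2^{nR_{20}}\}$ and $w_{21}\in\{1,\dots,2^{nR_{21}}\}$ with $R_2=R_{20}+R_{21}$; sub-message $w_{20}$ is carried by the $U$-codebook and is to be decoded at receiver~1 (in addition to $w_1$), while $w_{21}$ is carried by the $V$-codebook and decoded only at receiver~2. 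Codebook generation: generate $2^{nR_1}$ codewords $\bw(w_1)$ i.i.d.\ $\sim\prod p(w_t|q_t)$, and the paired $\bx_1(w_1)$ from $p(x_1|w,q)$; for each $w_1$, generate a bin of $2^{n(R_{20}+L_U)}$ sequences $\bu$ i.i.d.\ $\sim\prod p(u_t|w_t,q_t)$, indexed by $(w_{20},\ell_U)$, and similarly a bin of $2^{n(R_{21}+L_V)}$ sequences $\bv$ indexed by $(w_{21},\ell_V)$ from $p(v_t|w_t,q_t)$. Encoding at sender~2: given $(w_1,w_{20},w_{21})$, find indices $\ell_U,\ell_V$ such that $(\bw,\bu(w_1,w_{20},\ell_U),\bv(w_1,w_{21},\ell_V))$ is jointly typical, then transmit $\bx_2$ generated from $p(x_2|\tilde u,\tilde v,w,q)$ (here I identify $\tilde U,\tilde V$ with the chosen $U,V$ codewords; the tilde notation in \eqref{joint_pdf_most} accommodates a further Markov relaxation, which I would handle by the usual functional/averaged-channel representation).

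Next I would carry out the error-probability analysis. The \emph{encoding} (covering) step at sender~2 succeeds with high probability provided the binning rates satisfy the mutual-covering-type bounds; using the conditional structure these reduce to $L_U\ge I(U;W|Q)$, $L_V\ge I(V;W|Q)$, plus a joint-covering condition ensuring a pair $(\bu,\bv)$ jointly typical with $\bw$ can be found simultaneously, which (since $U\!-\!W\!-\!V$ given $Q$) adds no constraint beyond the two individual ones and $I(U;V|W,Q)=0$. On the \emph{decoding} side, receiver~2 looks for a unique $(\bu,\bv)$ — equivalently $(w_{20},\ell_U,w_{21},\ell_V)$ — jointly typical with $\by_2$ (and $\bq$); a standard packing-lemma count over the error events (wrong $U$ only, wrong $V$ only, both wrong) yields
\begin{align}
R_{20}+L_U &\le I(U;Y_2V|Q),\nonumber\\
R_{21}+L_V &\le I(V;Y_2U|Q),\nonumber\\
R_{20}+L_U+R_{21}+L_V &\le I(UV;Y_2|Q).\nonumber
\end{align}
Receiver~1 jointly decodes $(w_1,w_{20})$ by searching for a unique $(\bw,\bu)$ with some $\ell_U$ such that $(\bw,\bu,\by_1)$ is jointly typical with $\bq$; the packing lemma gives the three constraints $R_1\le I(W;Y_1U|Q)$, $R_{20}+L_U\le I(UW;Y_1|Q)\;(\text{via the term }I(U;Y_1W|Q)\text{ when }W\text{ is known})$ — more precisely the error event with wrong $\bu$ but correct $\bw$ needs $R_{20}+L_U\le I(U;Y_1|W,Q)$, and the event with both wrong needs $R_1+R_{20}+L_U\le I(WU;Y_1|Q)$.

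Finally I would eliminate the auxiliary binning rates $L_U,L_V\ge 0$ by Fourier–Motzkin: set $L_U=I(U;W|Q)$, $L_V=I(V;W|Q)$ (the smallest feasible values) and substitute into the decoding bounds. Substituting into the receiver-2 bounds and using $I(U;Y_1|W,Q)+I(W;Y_1|Q)=I(UW;Y_1|Q)$, $I(U;Y_2V|Q)-I(U;W|Q)\ge0$ etc., the individual-rate and sum-rate inequalities \eqref{region_ineq1}–\eqref{region_ineq5} together with the feasibility (non-negativity) constraints \eqref{region_cons_1}–\eqref{region_cons_4} emerge exactly; the bound $R_{20}+L_U\le I(U;Y_1|W,Q)$ combined with $R_{20}\ge0$ reproduces \eqref{region_cons_1}, and the analogous receiver-2 conditions give \eqref{region_cons_4} and its companions. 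Taking $n\to\infty$ and then the union over $p(\cdot)\in\mathcal{P}$ yields $\mathcal{R}\subseteq\mathcal{C}$. The main obstacle I anticipate is the \emph{joint} analysis at sender~2 — guaranteeing that a \emph{single} pair $(\ell_U,\ell_V)$ can be found making $\bu,\bv$ \emph{simultaneously} jointly typical with $\bw$ (a multivariate covering step), and dovetailing this with the simultaneous-decoding error events at receiver~1 so that the tilde-variables and the $x_2$-generation do not introduce hidden rate penalties; getting the Fourier–Motzkin elimination to land precisely on the stated seven inequalities (rather than a larger, redundant set) is the delicate bookkeeping part.
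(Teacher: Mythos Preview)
There is a genuine gap in your codebook construction. You generate the $U$- and $V$-sequences \emph{conditionally} on $\bw$, i.e., from $\prod p(u_t\,|\,w_t,q_t)$; this is a superposition step, not Gel'fand--Pinsker binning. Two things then go wrong. First, under conditional generation $(\bw,\bu)$ is automatically jointly typical, so no covering is needed; the penalties $L_U\ge I(U;W|Q)$ and $L_V\ge I(V;W|Q)$ you quote are the covering costs for \emph{marginal} generation (from $p(u|q)$, $p(v|q)$) and are simply inconsistent with the construction you describe. Second, and more seriously, receiver~2 does not decode $w_1$; if each $w_1$ induces its own $U$- and $V$-codebooks, receiver~2 has no fixed codebook over which to perform the typicality search you describe. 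The whole point of Gel'fand--Pinsker is to generate the auxiliary codewords from the \emph{marginal} so that a receiver ignorant of the state can still decode them.

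The paper's scheme does exactly this: $\bu(l_1)$ and $\bv(l_2)$ are drawn i.i.d.\ from $p(u|q)$ and $p(v|q)$, then binned; sender~2 searches \emph{separately} (not jointly --- the multivariate covering issue you anticipate never arises) for a $\bu$ in bin $k_1$ typical with $\bw$ and a $\bv$ in bin $k_2$ typical with $\bw$. The variables $\tilde U,\tilde V$ you set aside are not a notational nicety: for each pair $(l_1,j)$ a single $\tilde\bu(l_1,j)\sim\prod p(\tilde u_i|u_i,w_i,q_i)$ is drawn, and likewise $\tilde\bv(l_2,j)$, and $X_2$ is generated from $(\tilde U,\tilde V,W)$. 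This is precisely the mechanism that lets $X_2$ depend jointly on the chosen $(\bu,\bv)$ and on $\bw$ even though the $U,V$ codebooks themselves are marginal; identifying $\tilde U$ with $U$ discards it. A smaller point: receiver~1 in the paper uses \emph{non-unique} decoding of the $U$-index (an error is declared only if the $\hat j$'s disagree), so the ``wrong $\bu$, correct $\bw$'' event you list is not an error and generates no rate constraint; the inequality $R_{20}+L_U\le I(U;Y_1|W,Q)$ is extraneous, and in any case does not reduce to \eqref{region_cons_1} --- the constraints \eqref{region_cons_1}--\eqref{region_cons_4} in the paper are imposed afterward simply to keep $R_{21},R_{22}\ge 0$, not derived as packing bounds.
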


\begin{proof}
Before presenting the proof of Theorem \ref{theorem_region_most},
we state the following lemma as it will be frequently used in the
proof.

\begin{Lemma}[{\cite[Theorem 14.2.3]{cover_IT_book}}]
Let $A^{(n)}_{\epsilon}$ denote the typical set for the
probability mass distribution $p(s_1,s_2,s_3)$, and let
$P(\bS_1'=\bs_1, \bS_2'=\bs_2, \bS_3'=\bs_3) = \prod_{i=1}^n
p(s_{1i}|s_{3i})p(s_{2i}|s_{3i})p(s_{3i})$, then $P\{(\bS_1',
\bS_2', \bS_3')\in A^{(n)}_{\epsilon}\} \doteq
2^{-n(I(S'_1;S'_2|S'_3)\pm 6\epsilon)}$.
\label{lemma_cover}\end{Lemma}

To prove this theorem we apply the notion of the asymptotic
equipartition property (APE) \cite{cover_IT_book}. Our coding
scheme is mainly based on the arguments of superposition coding
\cite{cover75:broadcast} and Gel'fand-Pinsker coding
\cite{gelfand_pinsker80:channel_random_param}. Specifically,
sender 1 independently encodes its message $w_1$ as a whole; while
sender 2 needs split its message into two parts, i.e., $w_2 =
(w_{21},w_{22})$, and encode them separately. Both $w_{21}$ and
$w_{22}$ are encoded using the Gel'fand-Pinsker approach, but they
are processed differently at the receivers. The message $w_{22}$
will be decoded by receiver 2 only, while $w_{21}$ will be decoded
by both receivers. Moreover, knowing the message and codeword
which sender 1 is going to transmit, sender 2 not only can apply
Gel'fand-Pinsker coding to deal with the known interference, but
also can cooperate with sender 1 to transmit $w_1$ using
superposition coding. Let $R_{21}$ and $R_{22}$ denote the rates
of $w_{21}$ and $w_{22}$ respectively, i.e.,
$w_{21}\in\{1,\ldots,2^{nR_{21}}\}$ and
$w_{22}\in\{1,\ldots,2^{nR_{22}}\}$. If receiver 1 can decode
$w_1$ and receiver 2 can decode both $w_{21}$ and $w_{22}$ with
vanishing probabilities of error, then $(R_1, R_{21}+R_{22})$ is
an achievable rate pair for the IC-DMS.

To prove that the entire region $\mathcal{R}$ is achievable for
the channel, it is sufficient to prove that $\mathcal{R}(p)$ is
achievable for a fixed joint probability distribution $p(\cdot)\in
\mathcal{P}$.

\subsection{Random Codebook Generation}
Consider a fixed joint distribution $p(\cdot)\in \mathcal{P}$, and
a random time-sharing codeword $\bq$ of length $n$, which is given
to both senders and receivers. The codeword $\bq$ is assumed to be
generated according to $\prod^n_{i=1}p(q_{i})$.

Generate $2^{nR_1}$ independent codewords $\bw(j)$,
$j\in\{1,\ldots,2^{nR_{1}}\}$, according to
$\prod^n_{i=1}p(w_{i}|q_i)$; and for each $\bw(j)$ generate one
$\bx_1(j)$, according to $\prod^n_{i=1}p(x_{1i}|w_{i}q_i)$.
Similarly, generate $2^{n\tilde{R}_{21}}$ independent codewords
$\bu(l_1)$, $l_1\in\{1,\ldots,2^{n\tilde{R}_{21}}\}$, according to
$\prod^n_{i=1}p(u_{i}|q_i)$, and generate $2^{n\tilde{R}_{22}}$
independent codewords $\bv(l_2)$,
$l_2\in\{1,\ldots,2^{n\tilde{R}_{22}}\}$, according to
$\prod^n_{i=1}p(v_{i}|q_i)$.

For each codeword pair $(\bu(l_1),\bw(j))$, generate one codeword
$\tilde{\bu}(l_1,j)$ according to
$\prod^n_{i=1}p(\tilde{u}_{i}|u_{i}(l_1)w_i(j)q_i)$, and similarly
for each codeword pair $(\bv(l_2),\bw(j))$, generate one codeword
$\tilde{\bv}(l_2,j)$ according to
$\prod^n_{i=1}p(\tilde{v}_{i}|v_{i}(l_2)w_i(j)q_i)$. Lastly, for
each codeword triple $(\bu(l_1),\bv(l_2),\bw(j))$, generate one
codeword $\bx_2(l_1,l_2,j)$ according to
$\prod^n_{i=1}p(x_{2i}|\tilde{u}_{i}(l_1)\tilde{v}_{i}(l_2)w_i(j)q_i)$.

Now uniformly distribute $2^{n\tilde{R}_{21}}$ codewords
$\bu(l_1)$ into $2^{nR_{21}}$ bins indexed by
$k_1\in\{1,\ldots,2^{nR_{21}}\}$ such that each bin contains
$2^{n(\tilde{R}_{21}-R_{21})}$ codewords; uniformly distribute
$2^{n\tilde{R}_{22}}$ codewords $\bv(l_2)$ into $2^{nR_{22}}$ bins
indexed by $k_2\in\{1,\ldots,2^{nR_{22}}\}$ such that each bin
contains $2^{n(\tilde{R}_{22}-R_{22})}$ codewords.

The entire codebook is revealed to both senders and receivers.

\subsection{Encoding and Transmission}
We assume that the senders want to transmit a message vector
$(w_1,w_{21},w_{22}) = (j, k_1, k_2)$. Sender 1 simply encodes the
message as codeword $\bx_1(j)$ and sends the codeword with $n$
channel uses. Sender 2 will first need to look for a codeword
$\bu(\hat{l}_1)$ in bin $k_1$ such that $(\bu(\hat{l}_1), \bw(j),
\bq)\in A_{\epsilon}^{(n)}$, and a codeword $\bv(\hat{l}_2)$ in
bin $k_2$ such that $(\bv(\hat{l}_2), \bw(j), \bq)\in
A_{\epsilon}^{(n)}$. If sender 2 fails to do so, it will randomly
pick a codeword $\bu(\hat{l}_1)$ from bin $k_1$ or a codeword
$\bv(\hat{l}_2)$ from bin $k_2$. Sender 2 then transmits codeword
$\bx_2(\hat{l}_1,\hat{l}_2,j)$ through $n$ channel uses. We
further assume that the transmissions are perfectly synchronized.
\subsection{Decoding}
Receiver 1 first looks for all the index pairs
$(\hat{j},\hat{\hat{l}}_1)$ such that $(\bw(\hat{j}),
\bu(\hat{\hat{l}}_1), \by_1, \bq)\in A_{\epsilon}^{(n)}$. If
$\hat{j}$ in all the index pairs found are the same, receiver 1
determines $w_1=\hat{j}$, otherwise declares an error.

Receiver 2 will first look for all index pairs
$(\bar{\hat{l}}_1,\hat{\hat{l}}_2)$ such that $(
\bu(\bar{\hat{l}}_1), \bv(\hat{\hat{l}}_2),\by_2, \bq)\in
A_{\epsilon}^{(n)}$. If $\bar{\hat{l}}_1$ in all the index pairs
found are indices of codewords $\bu(\bar{\hat{l}}_1)$ from the
same bin with index $\hat{k}_1$, and $\hat{\hat{l}}_2$ in all the
index pairs found are indices of codewords $\bv(\hat{\hat{l}}_2)$
from the same bin with index $\hat{k}_2$, then receiver 2 will
decode that $(w_{21},w_{22})=(\hat{k}_1,\hat{k}_2)$ were
transmitted; otherwise, an error is declared.

\subsection{Evaluation of Probability of Error}
We now derive upper bounds for the probabilities of the respective
error events, which may happen during the encoding and decoding
process. Due to the symmetry of the codebook generation and
encoding processing, the probability of error is not codeword
dependent. Without loss of generality, we assume that
$(w_1,w_{21},w_{22}) = (1, 1, 1)$ were encoded and transmitted. We
next define the following three types of events:
\begin{align*}
&E_{a,b}=(\bu(a), \bw(b), \bq)\in
A_{\epsilon}^{(n)},\\
&\dot{E}_{a,b} = (\bw(a),
\bu(b), \by_1, \bq)\in A_{\epsilon}^{(n)},\\
&\ddot{E}_{a,b} = ( \bu(a), \bv(b),\by_2, \bq)\in
A_{\epsilon}^{(n)}.
\end{align*}

Let $P_e(\text{enc2})$, $P_e(\text{dec1})$, and $P_e(\text{dec2})$
denote the probabilities of error at the encoder of sender 2, the
decoder of receiver 1, and the decoder of receiver 2,
respectively.

\noindent [\textbf {Evaluation of} $P_e(\text{enc2})$.] An error
is made if 1) the encoder at sender 2 can not find
$\bu(\hat{l}_1)$ in bin 1 such that $(\bu(\hat{l}_1), \bw(1),
\bq)\in A_{\epsilon}^{(n)}$, and/or 2) it can not find
$\bv(\hat{l}_2)$ in bin 1 such that $(\bv(\hat{l}_2), \bw(1),
\bq)\in A_{\epsilon}^{(n)}$. Then the probability of error at the
encoder of sender 2 is bounded as
\begin{align}
  P_e(\text{enc2}) &\leq Pr\left(\bigcap_{\bu(\hat{l}_1) \in
  \text{bin 1}}(\bu(\hat{l}_1), \bw(1), \bq)\notin
A_{\epsilon}^{(n)}\right) + Pr\left(\bigcap_{\bv(\hat{l}_2) \in
  \text{bin 1}}(\bv(\hat{l}_2), \bw(1), \bq)\notin
A_{\epsilon}^{(n)}\right) \notag \\
  &=\prod_{\bu(\hat{l}_1) \in \text{bin 1}}
  Pr(E^c_{\hat{l}_1,1}) + \prod_{\bv(\hat{l}_2) \in \text{bin 1}}
  Pr(E^c_{\hat{l}_2,1}) \nonumber\\
  &\leq
  (1-Pr(E_{\hat{l}_1,1}))^{2^{n(\tilde{R}_{21}-R_{21})}} + (1-Pr(E_{\hat{l}_2,1}))^{2^{n(\tilde{R}_{22}-R_{22})}}\nonumber \\
  &\stackrel{(a)}{\leq}
  (1-2^{-n(I(U;W|Q)+6\epsilon)})^{2^{n(\tilde{R}_{21}-R_{21})}}+
  (1-2^{-n(I(V;W|Q)+6\epsilon)})^{2^{n(\tilde{R}_{22}-R_{22})}},\notag
\end{align}
where (a) follows from the fact that we can obtain
$Pr(E_{\hat{l}_1,1}) \geq 2^{-n(I(U;W|Q)+6\epsilon)}$ and
$Pr(E_{\hat{l}_2,1}) \geq 2^{-n(I(V;W|Q)+6\epsilon)}$ by setting
$\bS_1'=\bU$, $\bS_2'=W$, and $\bS_3'=\bQ$, and $\bS_1'=\bV$,
$\bS_2'=W$, and $\bS_3'=\bQ$ in Lemma \ref{lemma_cover},
respectively. Following the same argument in the proof of Lemma
2.1.3 of \cite{Berger78:multiterm_source}, we conclude that
$P_e(\text{enc2})\rightarrow 0$ as $n\rightarrow +\infty$, if
\begin{align}
\tilde{R}_{21}& \geq R_{21}+I(U;W|Q), \label{P_e_enc2_new1}\\
\tilde{R}_{22}& \geq R_{22}+I(V;W|Q), \label{P_e_enc2_new2}
\end{align}
are satisfied. We further choose
\begin{align}
\tilde{R}_{21} = R_{21} + I(U;W|Q), \label{P_e_enc2_new1_eq}\\
\tilde{R}_{22} = R_{22} + I(V;W|Q). \label{P_e_enc2_new2_eq}
\end{align}
Note that such a choice still ensures that
$P_e(\text{enc2})\rightarrow 0$ as $n\rightarrow +\infty$.

\noindent [\textbf {Evaluation of} $P_e(\text{dec1})$] An error is
made if 1) $\dot{E}^c_{1,\hat{l}_1}$ happens, and/or 2) there
exists some $\hat{j}\neq 1$ such that
$\dot{E}_{\hat{j},\hat{\hat{l}}_1}$ happens. Note that
$\hat{\hat{l}}_1$ is not required to be equal to $\hat{l}_1$,
since it is unnecessary for receiver 1 to decode $\hat{l}_1$
correctly. The probability of error at receiver 1 can be upper
bounded as
\begin{align}
P_e(\text{dec1}) &\leq
Pr(\dot{E}^c_{1,\hat{l}_1}\bigcup\cup_{\hat{j}\neq 1} \dot{E}_{\hat{j},\hat{\hat{l}}_1})\notag \\
&\leq Pr(\dot{E}^c_{1,\hat{l}_1}) + \sum_{\hat{j}\neq 1}
Pr(\dot{E}_{\hat{j},\hat{\hat{l}}_1})\nonumber\\
&= Pr(\dot{E}^c_{1,\hat{l}_1}) + \sum_{\hat{j}\neq 1}
Pr(\dot{E}_{\hat{j},\hat{l}_1})+ \sum_{\hat{j}\neq 1,\hat{\hat{l}}_1
\neq \hat{l}_1}P(\dot{E}_{\hat{j},\hat{\hat{l}}_1})\nonumber\\
&\leq Pr(\dot{E}^c_{1,\hat{l}_1}) + 2^{nR_{1}}
Pr(\dot{E}_{2,\hat{l}_1}) + 2^{n(R_{1}+\tilde{R}_{21})}
Pr(\dot{E}_{2,\hat{\hat{l}}_1\neq\hat{l}_1}). \label{P_e_rec1_new}
\end{align}
Choosing $\bS_1'=\bW$, $\bS_2'=(\bY_1,\bU)$, and $\bS_3'=\bQ$ in
Lemma \ref{lemma_cover}, we have $Pr(\dot{E}_{2,\hat{l}_1}) \doteq
2^{-n(I(W;Y_1U|Q)\pm 6\epsilon)}$. Likewise, we have
$Pr(\dot{E}_{2,\hat{\hat{l}}_1\neq\hat{l}_1}) \doteq
2^{-n(I(WU;Y_1|Q)\pm 6\epsilon)}$. In addition, it follows from
AEP that $Pr(\dot{E}^c_{1,\hat{l}_1})\rightarrow 0$ as $n
\rightarrow +\infty$. Thus, we infer from \eqref{P_e_rec1_new}
that $P_e(\text{dec1}) \rightarrow 0$ as $n \rightarrow +\infty$,
if
\begin{align}
R_1 & \leq I(W;Y_1U|Q),\label{P_e_rec1_new_1}\\
R_1 + \tilde{R}_{21} & \leq I(WU;Y_1|Q), \label{P_e_rec1_new_3}
\end{align}
are satisfied.

\noindent [\textbf {Evaluation of} $P_e(\text{dec2})$] An error is
made if 1) $\ddot{E}^c_{\hat{l}_1,\hat{l}_2}$ happens, and/or 2)
there exists some $(\bar{\hat{l}}_1,\hat{\hat{l}}_2)$ in which
either $\bar{\hat{l}}_1$ or $\hat{\hat{l}}_2$ is not an index of
any codeword from the respective bin 1. The probability of the
second case is upper bounded by the probability of the event,
$\ddot{E}_{\bar{\hat{l}}_1,\hat{\hat{l}}_2}$ for some
$(\bar{\hat{l}}_1,\hat{\hat{l}}_2) \neq (\hat{l}_1,\hat{l}_2)$.
Thus, the probability of error at receiver 2 is bounded as
\begin{align}
P_e(\text{dec2}) &\leq
Pr(\ddot{E}^c_{\hat{l}_1,\hat{l}_2}\bigcup\cup_{(\bar{\hat{l}}_1,\hat{\hat{l}}_2)
\neq(\hat{l}_1,\hat{l}_2)}\ddot{E}_{\bar{\hat{l}}_1,\hat{\hat{l}}_2})\nonumber\\
&\leq
Pr(\ddot{E}^c_{\hat{l}_1,\hat{l}_2})+\sum_{(\bar{\hat{l}}_1,\hat{\hat{l}}_2)
\neq
(\hat{l}_1,\hat{l}_2)}P(\ddot{E}_{\bar{\hat{l}}_1,\hat{\hat{l}}_2})\nonumber\\
&= Pr(\ddot{E}^c_{\hat{l}_1,\hat{l}_2})+\sum_{\bar{\hat{l}}_1\neq
\hat{l}_1} Pr(\ddot{E}_{\bar{\hat{l}}_1,\hat{l}_2}) +
\sum_{\hat{\hat{l}}_2\neq \hat{l}_2}
Pr(\ddot{E}_{\hat{l}_1,\hat{\hat{l}}_2}) +
\sum_{(\bar{\hat{l}}_1\neq \hat{l}_1,\hat{\hat{l}}_2\neq \hat{l}_2)}
Pr(\ddot{E}_{\bar{\hat{l}}_1,\hat{\hat{l}}_2}) \nonumber\\
&\leq Pr(\ddot{E}^c_{\hat{l}_1,\hat{l}_2})+
2^{n\tilde{R}_{21}}Pr(\ddot{E}_{\bar{\hat{l}}_1 \neq
\hat{l}_1,\hat{l}_2}) +
2^{n\tilde{R}_{22}}Pr(\ddot{E}_{\hat{l}_1,\hat{\hat{l}}_2\neq
\hat{l}_2}) +
2^{n(\tilde{R}_{21}+\tilde{R}_{22})}Pr(\ddot{E}_{\bar{\hat{l}}_1\neq
\hat{l}_1,\hat{\hat{l}}_2\neq\hat{l}_2}). \label{P_e_rec2_new}
\end{align}
Applying Lemma \ref{lemma_cover} to evaluate
$Pr(\ddot{E}_{\bar{\hat{l}}_1 \neq \hat{l}_1,\hat{l}_2})$,
$Pr(\ddot{E}_{\hat{l}_1,\hat{\hat{l}}_2\neq \hat{l}_2})$ and
$Pr(\ddot{E}_{\bar{\hat{l}}_1\neq
\hat{l}_1,\hat{\hat{l}}_2\neq\hat{l}_2})$ in \eqref{P_e_rec2_new},
we conclude that $P_e(\text{dec2}) \rightarrow 0$ as $n
\rightarrow +\infty$ if the following inequalities,
\begin{align}
  \tilde{R}_{21} &\leq I(U;Y_2V|Q),\label{P_e_rec2_1_new}\\
  \tilde{R}_{22} &\leq I(V;Y_2U|Q),\\
  \tilde{R}_{21} + \tilde{R}_{22} & \leq I(UV;Y_2|Q),\label{P_e_rec2_3_new}
\end{align}
are satisfied.

According to \eqref{P_e_enc2_new1_eq}, \eqref{P_e_enc2_new2_eq}
and the fact that $R_2 = R_{21} + R_{22}$, we first substitute
$\tilde{R}_{21}$ and $\tilde{R}_{22}$ with $R_{21} + I(U;W|Q)$ and
$R_{22} + I(V;W|Q)$ in \eqref{P_e_rec1_new_1},
\eqref{P_e_rec1_new_3} and
\eqref{P_e_rec2_1_new}--\eqref{P_e_rec2_3_new}, and subsequently
substitute $R_{21}$ with $R_2 - R_{22}$ in the resulting
inequalities. After these two substitution steps, we have
\begin{align}
R_1 & \leq I(W;Y_1U|Q),\label{fourier_start}\\
R_1 + R_2-R_{22} & \leq I(WU;Y_1|Q)-I(U;W|Q),\\
R_2 - R_{22} &\leq I(U;Y_2V|Q) - I(U;W|Q),\\
R_{22} &\leq I(V;Y_2U|Q)-I(V;W|Q),\\
R_2 & \leq I(UV;Y_2|Q)-(I(U;W|Q)+I(V;W|Q)).\label{fourier_end}
\end{align}
Furthermore, applying Fourier-Motzkin elimination
\cite{kramer06_IZS:review_IFC} to remove $R_{22}$ from
\eqref{fourier_start}--\eqref{fourier_end}, we have
\begin{align}
R_1 &\leq I(W;Y_1U|Q), \label{region_ineq1_redun}\\
R_2 &\leq I(UV;Y_2|Q) - (I(U;W|Q)+I(V;W|Q)),\label{redundant_inq_not}\\
R_2 &\leq I(U;Y_2V|Q) - I(U;W|Q) + I
(V;Y_2U|Q) - I(V;W|Q),\label{redundant_inq}\\
R_1 + R_2 &\leq I(WU;Y_1|Q)-I(U;W|Q) + I(V;Y_2U|Q) - I (V;W|Q).
\label{region_ineq5_redun}
\end{align}
Since $I(U;Y_2V|Q)+I (V;Y_2U|Q)-I(UV;Y_2|Q)=I(U;V|Q)+I(U;V|Y_2Q)
\geq 0$, \eqref{redundant_inq_not} implies \eqref{redundant_inq}
and thus \eqref{redundant_inq} is redundant. To ensure that $R_1$,
$R_{21}$ and $R_{22}$ are non-negative, we enforce four additional
constraints \eqref{region_cons_1}--\eqref{region_cons_4}.
Therefore, the rate region $\mathcal{R}(p)$ is achievable for a
fixed joint probability distribution $p(\cdot)\in \mathcal{P}$,
and Theorem \ref{theorem_region_most} follows.
\end{proof}

\begin{Remark}
The proposed coding scheme exploits three coding methods to
achieve any rate pair in the rate region, $\mathcal{R}$. The first
method is {\it cooperation} that is realized by the superposition
relationship between $\bw$ and $\bx_2$ through
$p(x_2|\tilde{u}_2,\tilde{v}_2,w,q)$. The second is {\it
collaboration}, by which we mean that sender 2 separates its own
message into two parts, i.e., $w_2=(w_{21},w_{22})$, and encodes
 $w_{21}$ at a possibly low rate such that receiver 1 can decode
it. By doing so, the effective interference caused by the signals
carrying the sender 2's information may be reduced. The third is
{\it Gel'fand-Pinsker coding}, which we apply to encode both
messages, $w_{21}$ and $w_{22}$, from sender 2 by treating the
codeword $\bw$ as known interference. This perhaps allows receiver
2 to be able to decode the messages from sender 2 at the same rate
as if the interference caused by sender 1 was not present
\cite{Costa83:dirty_paper}.
\end{Remark}

\section{Relating with Existing Rate Regions}\label{section_region_general}

In this section, we will show that Theorem
\ref{theorem_region_most} includes the achievable rate regions in
\cite{jovicic06:cog_ICDMS,wuwei06_icdms}. To demonstrate it, we
compromise the advantages of the coding scheme developed in
Section \ref{section_region_general_MOST} to obtain the following
subregions of $\mathcal{R}$.

\subsection{A Subregion of $\mathcal{R}$}\label{section_region_sim}

Let $\mathcal{P}^*$ denote the set of all joint probability
distributions $p(\cdot)$ that factors in the form of
\begin{align}
p(q,w,x_1,u,v,\tilde{v},x_2,y_1,y_2)=&
  p(q)p(x_1,w|q)p(u|q)p(v,\tilde{v}|w,q) \notag \\
  &\cdot p(x_2|u,\tilde{v},w,q)p(y_1,y_2|x_1,x_2). \label{joint_pdf}
\end{align}
Note that the joint distribution \eqref{joint_pdf} differs from
\eqref{joint_pdf_most} in the way that conditioned on $Q$, $U$ is
now independent of any other auxiliary random variables, and
$\tilde{U}$ is not present.

Let $\mathcal{R}_{\sim}(p)$ denote the set of all non-negative
rate pairs $(R_1,R_2)$ such that
\begin{align}
  R_1 &\leq I(W;Y_1|UQ), \label{R_sim_ineq1}\\
  R_2 &\leq I(UV;Y_2|Q) - I(V;W|Q),\\
  R_1+R_2 &\leq
  I(WU;Y_1|Q)+I(V;Y_2|UQ)-I(V;W|Q);\label{R_sim_ineq3}\\
  0 &\leq I(V;Y_1|UQ) - I(V;W|Q),\label{R_sim_cons}
\end{align}
for a joint probability distribution $p(\cdot)\in \mathcal{P}^*$.
Furthermore, let
\[
\mathcal{R}_{\sim} = \bigcup_{p(\cdot)\in
\mathcal{P}^*}\mathcal{R}_{\sim}(p).
\]

\begin{Theorem}
  The rate region $\mathcal{R}_{\sim}$ is achievable for the discrete memoryless
  IC-DMS, i.e.,  $\mathcal{R}_{\sim}\subseteq \mathcal{R} \subseteq \mathcal{C}$.
  \label{theorem_region_general}
\end{Theorem}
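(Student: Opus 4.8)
The plan is to derive the claim from Theorem~\ref{theorem_region_most} by a pure region-inclusion argument, with no new random-coding analysis. Since Theorem~\ref{theorem_region_most} already gives $\mathcal{R}\subseteq\mathcal{C}$, it suffices to prove $\mathcal{R}_{\sim}\subseteq\mathcal{R}$; and since both regions are unions over their respective distribution classes, it is enough to show that for every fixed $p(\cdot)\in\mathcal{P}^{*}$ there is a $p'(\cdot)\in\mathcal{P}$ with $\mathcal{R}_{\sim}(p)\subseteq\mathcal{R}(p')$. So the whole proof reduces to producing, for each $p(\cdot)\in\mathcal P^{*}$, a good choice of auxiliary variables in \eqref{joint_pdf_most}.

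Fix $p(\cdot)\in\mathcal{P}^{*}$, with auxiliary variables $W,U,V,\tilde V,Q$ as in \eqref{joint_pdf}. I would take $p'(\cdot)$ to be the distribution obtained by keeping $Q,W,X_1,U,V,\tilde V,X_2,Y_1,Y_2$ and adjoining the extra variable $\tilde U$ of \eqref{joint_pdf_most} as a \emph{deterministic copy of} $U$ (so $\tilde{\mathcal U}=\mathcal U$ and $\tilde U=U$). The first, routine, step is to check $p'(\cdot)\in\mathcal{P}$: under $\tilde U=U$ the factor $p(u,\tilde u\mid w,q)$ of \eqref{joint_pdf_most} reduces to $p(u\mid w,q)$ supported on $\{\tilde u=u\}$, which equals $p(u\mid q)$ on that set because in $\mathcal{P}^{*}$ the variable $U$ is generated from $Q$ alone and is therefore conditionally independent of $(W,X_1,V,\tilde V)$ given $Q$; moreover $p(x_2\mid\tilde u,\tilde v,w,q)$ collapses to the factor $p(x_2\mid u,\tilde v,w,q)$ appearing in \eqref{joint_pdf}. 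Hence $p'(\cdot)$ factors as required.

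The substantive step is to evaluate the defining inequalities of $\mathcal{R}(p')$ under this choice and recognize them as those of $\mathcal{R}_{\sim}(p)$. The key simplifications, all coming from the above conditional independences, are $I(U;W|Q)=0$ and $I(U;V|Q)=0$, hence $I(W;Y_1U|Q)=I(W;Y_1|UQ)$, $I(V;Y_2U|Q)=I(V;Y_2|UQ)$, and $I(U;Y_2V|Q)=I(U;Y_2|VQ)$. Substituting these into \eqref{region_ineq1}--\eqref{region_ineq5} turns the three rate bounds verbatim into \eqref{R_sim_ineq1}--\eqref{R_sim_ineq3}. For the four auxiliary constraints: \eqref{region_cons_1} and the constraint $0\le I(U;Y_2V|Q)-I(U;W|Q)$ become $0\le I(UW;Y_1|Q)$ and $0\le I(U;Y_2V|Q)$, which hold automatically; the constraint $0\le I(V;Y_2U|Q)-I(V;W|Q)$ reduces to \eqref{R_sim_cons}; and \eqref{region_cons_4} becomes $0\le I(UV;Y_2|Q)-I(V;W|Q)$, which is implied by the $R_2$-bound $R_2\le I(UV;Y_2|Q)-I(V;W|Q)$ together with $R_2\ge0$. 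Thus $\mathcal{R}_{\sim}(p)=\mathcal{R}(p')$ as sets of non-negative rate pairs, so $\mathcal{R}_{\sim}(p)\subseteq\mathcal{R}$, and the theorem follows.

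I expect the only delicate point to be this last bookkeeping on \eqref{region_cons_1}--\eqref{region_cons_4}: one must confirm that after the specialization none of them is strictly stronger than \eqref{R_sim_cons} together with non-negativity of $(R_1,R_2)$, which is exactly where the conditional independences of $\mathcal{P}^{*}$ and the identity $I(U;Y_2V|Q)+I(V;Y_2U|Q)-I(UV;Y_2|Q)=I(U;V|Q)+I(U;V|Y_2Q)\ge0$ already used in the proof of Theorem~\ref{theorem_region_most} are needed. An alternative, essentially equivalent, route would be to re-run the random-coding argument of Theorem~\ref{theorem_region_most} with $\tilde U$ suppressed and the target joint distribution forced to satisfy $U\perp W\mid Q$: then no binning is needed for the $\bu$ codewords, one may take $\tilde R_{21}=R_{21}$, $\bx_2$ is generated directly from $(\bu,\tilde\bv,\bw,\bq)$, and the error analysis collapses to the constraints \eqref{R_sim_ineq1}--\eqref{R_sim_cons}. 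The region-inclusion argument above is shorter and reuses Theorem~\ref{theorem_region_most} as a black box.
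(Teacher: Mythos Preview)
Your proposal is correct and takes a genuinely different route from the paper. The paper does not argue by region inclusion: it re-runs the random-coding construction of Theorem~\ref{theorem_region_most} with a simplified codebook adapted to \eqref{joint_pdf}. Because $U$ is independent of $W$ given $Q$ in $\mathcal{P}^{*}$, the Gel'fand--Pinsker binning for $w_{21}$ is dropped and the $\bu$-codewords are generated by conventional random coding; the error events are then re-evaluated and Fourier--Motzkin elimination is applied to obtain \eqref{R_sim_ineq1}--\eqref{R_sim_cons}. This is exactly the ``alternative, essentially equivalent, route'' you sketch in your last paragraph.

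Your primary argument instead treats Theorem~\ref{theorem_region_most} as a black box: you specialize $\tilde U=U$ in \eqref{joint_pdf_most}, use $I(U;W|Q)=I(U;V|Q)=0$ to collapse \eqref{region_ineq1}--\eqref{region_cons_4}, and read off \eqref{R_sim_ineq1}--\eqref{R_sim_cons}. This is shorter and has the virtue of making the containment $\mathcal{R}_{\sim}\subseteq\mathcal{R}$ explicit; the paper asserts that containment in the theorem statement but its proof actually establishes achievability of $\mathcal{R}_{\sim}$ directly rather than via $\mathcal{R}$. The paper's approach, by contrast, exposes the underlying simplified coding scheme, which is then reused verbatim (with successive instead of simultaneous decoding) to derive $\mathcal{R}_{\suc}$ and its Gaussian specializations in the sequel.

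One small bookkeeping remark: your reduction of the third auxiliary constraint yields $0\le I(V;Y_2|UQ)-I(V;W|Q)$, whereas \eqref{R_sim_cons} as printed has $Y_1$ rather than $Y_2$. This appears to be a typographical slip in the paper (the scheme only asks receiver~2 to decode $w_{22}$, and the paper's own error analysis produces the $Y_2$ version), so your identification is the intended one.
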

\begin{proof}
The proof can be devised from the proof of Theorem
\ref{theorem_region_most} by customizing the original coding
scheme for the new joint distribution \eqref{joint_pdf}. We change
the encoding and decoding method for the message $w_{21}$
(corresponding to $U$), i.e., the Gel'fand-Pinsker coding used in
the proof of Theorem \ref{theorem_region_most} was replaced by the
conventional random coding. Specifically, we generate
$2^{nR_{21}}$ independent codewords $\bu(k_1)$, $k_1 \in
\{1,\ldots,2^{nR_{21}}\}$, according to
$\prod^n_{i=1}p(u_{i}|q_i)$. The encoding and decoding are then
adapted to the new codebook accordingly. Evaluating the
probability of error in the same way as was done in the proof of
Theorem \ref{theorem_region_most}, we obtain
\begin{align}
&\tilde{R}_{22}-R_{22} \geq I(V;W|Q);\label{eq_need}\\
&R_1  \leq I(W;Y_1|UQ),\label{fourier_sim_start}\\
&R_1 + R_{21}  \leq I(WU;Y_1|Q); \\
&R_{21} \leq I(U;Y_2|VQ),\\
&\tilde{R}_{22} \leq I(V;Y_2|UQ),\\
&R_{21} + \tilde{R}_{22}  \leq I(UV;Y_2|Q) \label{fourier_sim_end}.
\end{align}
Again, we choose $\tilde{R}_{22}-R_{22}=I(V;W|Q)$ in
\eqref{eq_need}, and then substitute $\tilde{R}_{22}$ with
$R_{22}+I(V;Y_2|UQ)$ as well as $R_{21}$ with $R_{2}-R_{22}$ in
the group of \eqref{fourier_sim_start}--\eqref{fourier_sim_end}.
By applying Fourier-Motzkin elimination on the resulting
inequalities to remove $R_{22}$, and adding the constraints that
ensure the respective rates $R_1$, $R_{21}$ and $R_{22}$ to be
non-negative, we obtain \eqref{R_sim_ineq1}--\eqref{R_sim_cons}.
Therefore, the region $R_{\sim}(p)$ is achievable for a given
$p(\cdot) \in \mathcal{P}^*$, and the theorem follows.
\end{proof}

Note that simultaneous decoding (simultaneous joint typicality) is
applied at both decoders. The advantage of simultaneous decoding
over successive decoding is well demonstrated on the IC by Han and
Kobayashi in \cite{Han81:IFC}. We next modify the coding scheme by
applying successive decoding instead of simultaneous decoding at
both decoders to derive a subregion of $\mathcal{R}_{\sim}$.

\subsection{A Subregion of $\mathcal{R}_{\text{sim}}$}\label{section_region_compromised}

Let $\mathcal{R}_{\suc}(p)$ denote the set of all achievable rate
pairs $(R_1, R_2)$ such that
\begin{align}
  &R_1  \leq I(W;Y_1|UQ),\label{region_comp_ineq1_reduced}\\
  &R_2  \leq \min\{I(U;Y_1|Q), I(U;Y_2|Q)\} +
  I(V;Y_2|UQ)-I(V;W|Q); \label{region_comp_ineq2_reduced}\\
  &0 \leq I(V;Y_1|UQ) - I(V;W|Q), \label{region_comp_cons_reduced}
\end{align}
for a fixed joint probability distribution $p(\cdot)\in
\mathcal{P}^*$. Define
  \begin{align*}
    \mathcal{R}_{\suc} = \bigcup_{p(\cdot)\in
  \mathcal{P}^*}\mathcal{R}_{\suc}(p).
  \end{align*}

\begin{Theorem}
The rate region $\mathcal{R}_{\suc}$ is achievable for the
discrete memoryless IC-DMS, i.e., $\mathcal{R}_{\suc} \subseteq
\mathcal{R}_{\sim} \subseteq \mathcal{R} \subseteq \mathcal{C}$.
\label{theorem_region_comp}
\end{Theorem}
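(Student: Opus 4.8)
The plan is to prove the single-letter inclusion $\mathcal{R}_{\suc}(p)\subseteq\mathcal{R}_{\sim}(p)$ for every fixed $p(\cdot)\in\mathcal{P}^{*}$. Since $\mathcal{R}_{\suc}=\bigcup_{p(\cdot)\in\mathcal{P}^{*}}\mathcal{R}_{\suc}(p)$ and $\mathcal{R}_{\sim}=\bigcup_{p(\cdot)\in\mathcal{P}^{*}}\mathcal{R}_{\sim}(p)$ range over the \emph{same} family $\mathcal{P}^{*}$, such an inclusion immediately yields $\mathcal{R}_{\suc}\subseteq\mathcal{R}_{\sim}$, and the remaining inclusions $\mathcal{R}_{\sim}\subseteq\mathcal{R}\subseteq\mathcal{C}$ are exactly the content of Theorem~\ref{theorem_region_general}. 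So it suffices to fix $p(\cdot)\in\mathcal{P}^{*}$, take an arbitrary $(R_{1},R_{2})\in\mathcal{R}_{\suc}(p)$, and verify the defining constraints of $\mathcal{R}_{\sim}(p)$: \eqref{R_sim_ineq1}, the bound $R_{2}\le I(UV;Y_{2}|Q)-I(V;W|Q)$, \eqref{R_sim_ineq3}, \eqref{R_sim_cons}, and $R_{1},R_{2}\ge 0$.

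Two of these are verbatim: the bound \eqref{region_comp_ineq1_reduced} on $R_{1}$ in $\mathcal{R}_{\suc}(p)$ is identical to \eqref{R_sim_ineq1}, the constraint \eqref{region_comp_cons_reduced} is identical to \eqref{R_sim_cons}, and $R_{1},R_{2}\ge 0$ holds by definition of $\mathcal{R}_{\suc}(p)$. The other two follow from \eqref{region_comp_ineq2_reduced} by discarding one branch of the minimum and applying the chain rule for mutual information. Keeping the branch $I(U;Y_{2}|Q)$ in \eqref{region_comp_ineq2_reduced} gives $R_{2}\le I(U;Y_{2}|Q)+I(V;Y_{2}|UQ)-I(V;W|Q)=I(UV;Y_{2}|Q)-I(V;W|Q)$, which is the $R_{2}$-bound of $\mathcal{R}_{\sim}(p)$; and keeping instead the branch $I(U;Y_{1}|Q)$ and adding \eqref{region_comp_ineq1_reduced} gives $R_{1}+R_{2}\le I(W;Y_{1}|UQ)+I(U;Y_{1}|Q)+I(V;Y_{2}|UQ)-I(V;W|Q)=I(WU;Y_{1}|Q)+I(V;Y_{2}|UQ)-I(V;W|Q)$, which is \eqref{R_sim_ineq3}. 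Hence $(R_{1},R_{2})\in\mathcal{R}_{\sim}(p)$, and the chain of inclusions stated in the theorem follows.

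There is essentially no genuine obstacle here; the only care needed is the bookkeeping of which branch of the minimum to relax for which inequality (the $Y_{2}$-branch for the $R_{2}$ bound, the $Y_{1}$-branch combined with the $R_{1}$ bound for the sum-rate bound) together with the two chain-rule identities $I(UV;Y_{2}|Q)=I(U;Y_{2}|Q)+I(V;Y_{2}|UQ)$ and $I(WU;Y_{1}|Q)=I(U;Y_{1}|Q)+I(W;Y_{1}|UQ)$ that close the gaps. Note also that no additional argument is needed, since the theorem only asserts inclusion (not proper inclusion).

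Alternatively, as the surrounding discussion suggests, one may reprove achievability of $\mathcal{R}_{\suc}$ directly by running the coding scheme behind Theorem~\ref{theorem_region_general} with \emph{successive} decoding in place of simultaneous decoding: receiver~1 first decodes $\bu$ (treating $\bw$ and $\bx_{1}$ as noise) and then decodes $\bw$ with $\bu$ as side information, while receiver~2 first decodes $\bu$ and then decodes $\bv$ given $\bu$. With the same choice of $\tilde{R}_{22}$ as in the proof of Theorem~\ref{theorem_region_general} and a Fourier-Motzkin elimination of $R_{22}$, this route reproduces the region \eqref{region_comp_ineq1_reduced}--\eqref{region_comp_cons_reduced}; its error analysis is a strict specialization of the one already carried out for Theorem~\ref{theorem_region_most} and so contributes nothing new. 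I would therefore present the short inclusion argument above rather than a second coding proof.
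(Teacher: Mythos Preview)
Your proof is correct and takes a genuinely different route from the paper. The paper argues achievability of $\mathcal{R}_{\suc}$ operationally: it re-runs the coding scheme of Theorem~\ref{theorem_region_general} with successive rather than simultaneous decoding (each receiver decodes $\bu$ first, then its remaining message), obtains the error-exponent constraints \eqref{inq_suc_start}--\eqref{inq suc_end}, and reads off \eqref{region_comp_ineq1_reduced}--\eqref{region_comp_cons_reduced}. You instead bypass any new coding argument and prove the pointwise inclusion $\mathcal{R}_{\suc}(p)\subseteq\mathcal{R}_{\sim}(p)$ by relaxing the minimum in \eqref{region_comp_ineq2_reduced} twice and invoking the chain rule, then appeal to Theorem~\ref{theorem_region_general} for achievability. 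Your approach is shorter and makes the chain $\mathcal{R}_{\suc}\subseteq\mathcal{R}_{\sim}$ explicit (the paper leaves this inclusion implicit in the remark that successive decoding is a weakening of simultaneous decoding); the paper's approach, on the other hand, exhibits the concrete scheme that achieves $\mathcal{R}_{\suc}$ and motivates the later Remark on why \eqref{inq_suc_special} appears only under successive decoding. Your final paragraph correctly identifies the paper's route as the alternative, so you have anticipated both options.
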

\begin{proof}
The codebook generation, encoding and transmission remain the same
as those used to prove Theorem \ref{theorem_region_general},
whereas the decoding processes at both decoders are altered. Both
decoders decode $w_{21}$ first, and then decoder 1 decodes $w_1$
and decoder 2 decodes $w_{22}$ respectively. Then the following
can easily be obtained
\begin{align}
&\tilde{R}_{22}-R_{22} \geq I(V;W|Q),\label{inq_suc_start}\\
&R_{21} \leq I(U;Y_1|Q),\label{inq_suc_special}\\
&R_1 \leq I(W;Y_1|UQ), \\
&R_{21}  \leq I(U;Y_2|Q),\\
&\tilde{R}_{22} \leq I(V;Y_2|UQ)\label{inq suc_end}.
\end{align}
From \eqref{inq_suc_start}--\eqref{inq suc_end}, it is
straightforward to obtain
\eqref{region_comp_ineq1_reduced}--\eqref{region_comp_cons_reduced}.
Therefore, the region $\mathcal{R}_{\suc}(p)$ is achievable, and
the theorem follows immediately.
\end{proof}

\begin{Remark}
Note that \eqref{inq_suc_special} is only necessary when the
successive decoding is applied. This is because every decoding
step in a successive decoding scheme is expected to have a
vanishing probability of error.
\end{Remark}

In what follows, we further specialize the subregion
$\mathcal{R}_{\suc}$ to obtain two more achievable rate regions
$\mathcal{R}_{\sptfirst}$ and $\mathcal{R}_{\spttwo}$. Let
$\mathcal{P}^*_1$ denote the set of all joint probability density
distributions $p(\cdot)$ that factor in the form of
\begin{align}
    p(q,w,x_1,v,\tilde{v},x_2,y_1,y_2) =
  p(q)p(x_1,w|q)p(v,\tilde{v}|w,q)p(x_2|\tilde{v},w,q)p(y_1,y_2|x_1,x_2). \label{joint_pdf_sp1}
\end{align}
Let $\mathcal{R}_{\sptfirst}(p)$ denote the set of all
non-negative rate pairs $(R_1, R_2)$ such that
\begin{align}
  R_1 & \leq I(W;Y_1|Q),\label{region_sp1_ineq1}\\
  R_2 & \leq I(V;Y_2|Q)-I(V;W|Q), \label{region_sp1_ineq2}
\end{align}
for a fixed joint distribution $p(\cdot) \in \mathcal{P}^*_1$.
Define
\[
\mathcal{R}_{\sptfirst} = \bigcup_{p(\cdot) \in \mathcal{P}^*_1}
  \mathcal{R}_{\sptfirst}(p).
\]
\begin{Corollary}
The region $\mathcal{R}_{\sptfirst}$ is an achievable rate region
for the discrete memoryless IC-DMS, i.e., $\mathcal{R}_{\sptfirst}
\subseteq \mathcal{R}_{\suc} \subseteq \mathcal{R}_{\sim}
  \subseteq \mathcal{R} \subseteq
  \mathcal{C}$. \label{corollary_sp1}
\end{Corollary}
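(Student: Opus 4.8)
The plan is to obtain $\mathcal{R}_{\sptfirst}$ as a special case of $\mathcal{R}_{\suc}$ by fixing the auxiliary random variable $U$ to be a constant. First I would observe that $\mathcal{P}^*_1$ is precisely the subset of $\mathcal{P}^*$ obtained by setting $U$ equal to a deterministic constant (equivalently, $\mathcal{U}$ is a singleton): the factorization \eqref{joint_pdf_sp1} is \eqref{joint_pdf} with every term involving $U$ dropped, and $p(x_2|u,\tilde{v},w,q)$ collapsing to $p(x_2|\tilde{v},w,q)$. Hence $\mathcal{P}^*_1 \subseteq \mathcal{P}^*$ in the sense of inducing the same channel-relevant joint distributions, so it suffices to show that for each $p(\cdot)\in\mathcal{P}^*_1$ the region $\mathcal{R}_{\sptfirst}(p)$ is contained in $\mathcal{R}_{\suc}(p')$ where $p'$ is the corresponding distribution in $\mathcal{P}^*$ with $U$ constant.

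The key step is then to evaluate the mutual information terms in \eqref{region_comp_ineq1_reduced}--\eqref{region_comp_cons_reduced} under the substitution $U = \text{const}$. When $U$ is constant, $I(W;Y_1|UQ) = I(W;Y_1|Q)$, which turns \eqref{region_comp_ineq1_reduced} into \eqref{region_sp1_ineq1}. Likewise $I(V;Y_2|UQ) = I(V;Y_2|Q)$ and $I(V;W|Q)$ is unchanged, while $I(U;Y_1|Q) = I(U;Y_2|Q) = 0$, so the min-term in \eqref{region_comp_ineq2_reduced} vanishes and that constraint becomes exactly \eqref{region_sp1_ineq2}. Finally, with $U$ constant the side constraint \eqref{region_comp_cons_reduced} reads $0 \le I(V;Y_1|Q) - I(V;W|Q)$; I would note that this is implied by \eqref{region_sp1_ineq2} together with non-negativity of $R_2$ only if $I(V;Y_1|Q)\ge I(V;Y_2|Q)$, which need not hold, so instead I would argue that \eqref{region_comp_cons_reduced} is automatically satisfiable — one simply restricts attention to those $p(\cdot)\in\mathcal{P}^*_1$ for which it holds, and for the remaining distributions the corresponding $\mathcal{R}_{\sptfirst}(p)$ contributes nothing beyond what is already covered, OR (cleaner) one absorbs it by noting the region $\mathcal{R}_{\sptfirst}$ as defined is the union over $\mathcal{P}^*_1$ and any rate pair in it with $R_2>0$ forces $I(V;Y_2|Q)>I(V;W|Q)$; the honest route is that the constraint \eqref{region_comp_cons_reduced} was only needed in Theorem~\ref{theorem_region_comp} to keep $R_{22}$ non-negative, and in the $\mathcal{R}_{\sptfirst}$ specialization we are free to also set $V$-related quantities so that it holds, or simply to include it implicitly.

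Carrying this out, the containment $\mathcal{R}_{\sptfirst}(p)\subseteq\mathcal{R}_{\suc}(p')$ follows for every $p\in\mathcal{P}^*_1$, and taking unions gives $\mathcal{R}_{\sptfirst}\subseteq\mathcal{R}_{\suc}$; chaining with Theorem~\ref{theorem_region_comp} then yields $\mathcal{R}_{\sptfirst}\subseteq\mathcal{R}_{\suc}\subseteq\mathcal{R}_{\sim}\subseteq\mathcal{R}\subseteq\mathcal{C}$, which is the claim. I expect the main obstacle to be the bookkeeping around the side constraint \eqref{region_comp_cons_reduced}: one must check carefully that dropping $U$ does not leave a stray inequality that is violated, and the cleanest resolution is probably to retrace the Fourier–Motzkin step in the proof of Theorem~\ref{theorem_region_comp} with $U$ constant from the outset (so that the line \eqref{inq_suc_special} becomes $R_{21}\le 0$, forcing $R_{21}=0$ and hence $R_2 = R_{22}$), rather than trying to specialize the already-eliminated inequalities \eqref{region_comp_ineq1_reduced}--\eqref{region_comp_cons_reduced} after the fact. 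With $R_{21}=0$ the decoding at receiver~1 no longer involves $U$ at all, and the argument reduces to a single superposition/Gel'fand–Pinsker layer, immediately giving \eqref{region_sp1_ineq1}--\eqref{region_sp1_ineq2}.
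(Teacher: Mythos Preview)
Your approach is essentially the same as the paper's: fix the auxiliary random variable $U$ to be a constant and observe that \eqref{region_comp_ineq1_reduced}--\eqref{region_comp_ineq2_reduced} collapse to \eqref{region_sp1_ineq1}--\eqref{region_sp1_ineq2}. The paper's proof is a single sentence that does exactly this and says nothing more.

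You are, if anything, more careful than the paper. The side constraint \eqref{region_comp_cons_reduced} is simply not mentioned in the paper's proof, and your observation that specializing the \emph{post-elimination} inequalities leaves a stray feasibility condition $0\le I(V;Y_1|Q)-I(V;W|Q)$ is legitimate. Your second resolution---go back to the pre-Fourier--Motzkin inequalities \eqref{inq_suc_start}--\eqref{inq suc_end} with $U$ constant, so that $R_{21}=0$, $R_2=R_{22}$, and the decoding constraint at receiver~1 on $w_{21}$ becomes vacuous---is the clean and correct way to handle it, and it delivers \eqref{region_sp1_ineq1}--\eqref{region_sp1_ineq2} directly without any residual side constraint. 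So your proposal is sound and aligned with the paper's argument, with the bonus that you have actually addressed the one bookkeeping point the paper skips.
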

\begin{proof}
  Fixing the auxiliary random variable $U$
  as a constant, we reduce \eqref{region_comp_ineq1_reduced}
  and \eqref{region_comp_ineq2_reduced} to
  \eqref{region_sp1_ineq1} and \eqref{region_sp1_ineq2}, and the
  corollary follows immediately.
\end{proof}

\begin{Remark}
We note that the region $\mathcal{R}_{\sptfirst}$ is similar to
the region $\mathcal{R}_{\inn}$ reported in \cite[Theorem
3.1]{wuwei06_icdms}. It seems that the region $\mathcal{R}_{\inn}$
is more general than the region $\mathcal{R}_{\sptfirst}$ in the
sense that fixing the auxiliary random variable $U$ in
$\mathcal{R}_{\inn}$ as a constant, one can obtain a region which
is the same as $\mathcal{R}_{\sptfirst}$. Nevertheless, after
examining the coding scheme used in \cite[Theorem
3.1]{wuwei06_icdms}, one can find that there exists a one-one
correspondence between codewords $\bu(w_2)$ and $\bx_2(w_2)$, and
both codewords are jointly generated and decoded, i.e., $p(u,x_2)$
is used to generate two-letter codewords. Thus, one can introduce
one auxiliary random variable $W$ such that there exists a one-one
mapping between $\mathcal{W}$ and $\mathcal{U} \times
\mathcal{X}_2$, i.e., $f:\mathcal{U} \times \mathcal{X}_2
\leftrightarrow \mathcal{W} $, and thus $W$ has the probability
mass distribution $p(w) = p(f^{-1}(w))=p(u,x_2)$. Replacing all
$(X_2^n(w_2),U^n(w_2))$ by $W^n(w_2)$ in the proof of
\cite[Theorem 3.1]{wuwei06_icdms} will yield the same rate region.
Equivalently speaking, for any input distribution achieving a rate
region characterized by \cite[Theorem 3.1]{wuwei06_icdms}, one can
find a corresponding joint distribution in the form of
\eqref{joint_pdf_sp1} such that Corollary \ref{corollary_sp1}
yields exactly the same rate region. Therefore, two rate regions
$\mathcal{R}_{\sptfirst}$ and $\mathcal{R}_{\inn}$ are identical.
\end{Remark}

Let $\mathcal{P}^*_2$ denote the set of all joint probability
distributions $p(\cdot)$ that factor in the form of
\begin{align}
    p(q,w,x_1,u,x_2,y_1,y_2) =
  p(q)p(x_1,w|q)p(u|q)p(x_2|u,w,q)p(y_1,y_2|x_1,x_2). \label{joint_pdf_sp2}
\end{align}
Let $\mathcal{R}_{\spttwo}(p)$ denote the set of all non-negative
rate pairs $(R_1, R_2)$ such that
\begin{align}
  R_1 & \leq I(W;Y_1|UQ),\label{region_sp2_ineq1}\\
  R_2 & \leq \min\{I(U;Y_1|Q), I(U;Y_2|Q)\}, \label{region_sp2_ineq2}
\end{align}
for a fixed joint distribution $p(\cdot) \in \mathcal{P}^*_2$.
Define
\[
\mathcal{R}_{\spttwo} = \bigcup_{p(\cdot) \in \mathcal{P}^*_2}
  \mathcal{R}_{\spttwo}(p).
\]
\begin{Corollary}
  The region $\mathcal{R}_{\spttwo}$ is an achievable rate region for the
  discrete memoryless IC-DMS, i.e., $\mathcal{R}_{\spttwo}  \subseteq \mathcal{R}_{\suc} \subseteq \mathcal{R}_{\sim}
  \subseteq \mathcal{R} \subseteq
  \mathcal{C}$. \label{corollary_sp2}
\end{Corollary}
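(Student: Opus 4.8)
The plan is to obtain $\mathcal{R}_{\spttwo}$ as a further specialization of $\mathcal{R}_{\suc}$, exactly in the spirit of the proof of Corollary \ref{corollary_sp1}, but now fixing the auxiliary random variable $V$ (rather than $U$) to be a constant. First I would observe that if $V$ is deterministic given $Q$, then the joint distribution \eqref{joint_pdf} — which underlies $\mathcal{R}_{\suc}$ — collapses: the factors $p(v,\tilde v|w,q)$ and $p(x_2|u,\tilde v,w,q)$ reduce to a single factor $p(x_2|u,w,q)$, so the resulting law factors precisely as in \eqref{joint_pdf_sp2}, i.e. $p(\cdot)\in\mathcal{P}^*_2$. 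Hence every distribution in $\mathcal{P}^*_2$ arises from a (degenerate) distribution in $\mathcal{P}^*$, and it suffices to check how the three defining constraints of $\mathcal{R}_{\suc}(p)$ in \eqref{region_comp_ineq1_reduced}--\eqref{region_comp_cons_reduced} specialize.

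Next I would evaluate those constraints under $V=\text{const}$. The inequality \eqref{region_comp_ineq1_reduced}, $R_1\le I(W;Y_1|UQ)$, is unaffected and becomes \eqref{region_sp2_ineq1} verbatim. In \eqref{region_comp_ineq2_reduced}, the terms involving $V$ vanish: $I(V;Y_2|UQ)=0$ and $I(V;W|Q)=0$ since $V$ is deterministic, so the bound reduces to $R_2\le\min\{I(U;Y_1|Q),I(U;Y_2|Q)\}$, which is exactly \eqref{region_sp2_ineq2}. Finally, the non-negativity side-constraint \eqref{region_comp_cons_reduced}, $0\le I(V;Y_1|UQ)-I(V;W|Q)$, becomes $0\le 0$ and is therefore vacuous and can be dropped. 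Thus $\mathcal{R}_{\spttwo}(p)$ coincides with $\mathcal{R}_{\suc}(p)$ evaluated at the corresponding degenerate $p\in\mathcal{P}^*$, so $\mathcal{R}_{\spttwo}(p)\subseteq\mathcal{R}_{\suc}$ for each such $p$, and taking the union over $\mathcal{P}^*_2$ gives $\mathcal{R}_{\spttwo}\subseteq\mathcal{R}_{\suc}$. Chaining this with the inclusions already established in Theorems \ref{theorem_region_most}, \ref{theorem_region_general}, and \ref{theorem_region_comp} yields $\mathcal{R}_{\spttwo}\subseteq\mathcal{R}_{\suc}\subseteq\mathcal{R}_{\sim}\subseteq\mathcal{R}\subseteq\mathcal{C}$, which is the claim.

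I do not anticipate a genuine obstacle here: the argument is a routine degeneration of the already-proved Theorem \ref{theorem_region_comp}, and the only point requiring a line of care is the bookkeeping check that setting $V$ constant legitimately sends $\mathcal{P}^*$ into $\mathcal{P}^*_2$ and that every information term carrying $V$ indeed drops to zero (which it does, since $H(V|Q)=0$ forces $I(V;\,\cdot\,|\cdot)=0$ for any conditioning). No new codebook construction or error-probability analysis is needed, since we are reusing the scheme of Theorem \ref{theorem_region_comp} with one auxiliary codeword layer suppressed.
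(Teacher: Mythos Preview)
Your proposal is correct and matches the paper's own proof, which is the single line ``the proof can be devised from the proof of Theorem~\ref{theorem_region_comp} easily by fixing $V$ as a constant.'' You have simply spelled out the bookkeeping that this one-line proof leaves implicit, and every step (the collapse of \eqref{joint_pdf} to \eqref{joint_pdf_sp2}, the vanishing of the $V$-terms in \eqref{region_comp_ineq2_reduced} and \eqref{region_comp_cons_reduced}, and the chaining of inclusions) is handled correctly.
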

\begin{proof}
  The proof can be devised from the proof of Theorem
  \ref{theorem_region_comp} easily by fixing $V$ as a constant.
\end{proof}

\section{The Gaussian IC-DMS}\label{section_region_Gaussian}
In the preceding sections, we have derived several achievable rate
regions for the discrete memoryless IC-DMS. We now extend these
results to obtain corresponding achievable rate regions for the
{\it Gaussian} IC-DMS (GIC-DMS).

\subsection{The Channel Model of the GIC-DMS}
In general, with no loss of information-theoretic optimality, the
GIC-DMS can be converted to the GIC-DMS in the standard form
through invertible transformations \cite{jovicic06:cog_ICDMS,
kramer06_IZS:review_IFC, Carleial78:IFC}. We thus only consider
the GIC-DMS in the standard form, which is represented as follows
\begin{align*}
Y_1 &= X_1 + \sqrt{c_{21}}X_2 + Z_1, \notag \\
Y_2 &= X_2 +\sqrt{c_{12}}X_1 + Z_2,
\end{align*}
where $Z_i$, $i=1,2$, is the additive white Gaussian noise with
zero mean and unit variance, and $\sqrt{c_{21}}$ and
$\sqrt{c_{12}}$ are the {\it normalized} link gains in the GIC-DMS
depicted in Fig. \ref{fig_ICDMS_Gaussian}.
\begin{figure}[h]
\centering \psfrag{X1}{$\bx_1(w_1)$} \psfrag{X2}{$\bx_2(w_2,w_1)$}
\psfrag{Y1}{$\by_1$} \psfrag{Y2}{$\by_2$}\psfrag{Z1}{$\bz_1$}
\psfrag{Z2}{$\bz_2$}\psfrag{c12}{$\sqrt{c_{12}}$}
\psfrag{c21}{$\sqrt{c_{21}}$}
\includegraphics{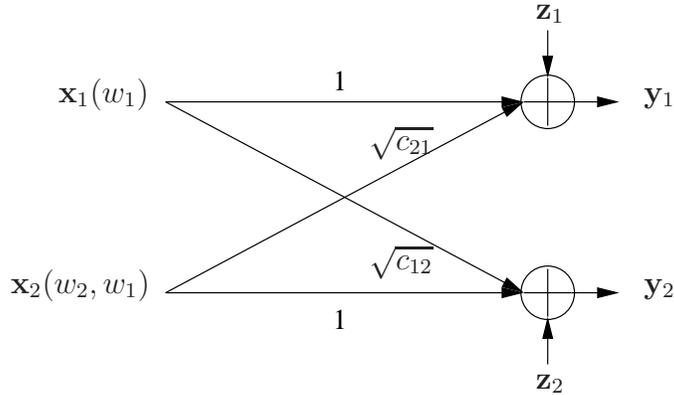}
\caption{A Gaussian interference channel with degraded message
sets.} \label{fig_ICDMS_Gaussian}
\end{figure}
Moreover, the transmitted codeword $\bx_i=(x_{i1},\ldots,x_{in})$,
$ i=1,2$, is subject to an average power constraint given by
\begin{align*}
  \frac{1}{n} \sum_{t=1}^n \|{x_{it}}\|^2 \leq P_i.
\end{align*}
Since it has been shown in the maximum-entropy theorem in
\cite{cover_IT_book} that Gaussian input signals are optimal for
Gaussian channels, we will consider Gaussian codewords $X^n_i$,
$i=1,2$.

\subsection{Achievable Rate Regions for the GIC-DMS}

\subsubsection{Gaussian Extension of $\mathcal{R}$}

We first extend $\mathcal{R}$ to its Gaussian counterpart denoted
by $\mathcal{G}$. To obtain the rate region $\mathcal{G}$, we map
the random variables involved in the joint distribution
\eqref{joint_pdf_most} to the corresponding Gaussian random
variables with the following customary constraints:
\begin{enumerate}[P1)]
\item $W$, distributed according to $\mathcal{N}(0,1)$,
\item $X_1=\sqrt{P_1}W$,
\item $\tilde{U}$, distributed according to $\mathcal{N}(0,\alpha \beta
P_2)$,
\item $\tilde{V}$, distributed according to $\mathcal{N}(0,\alpha \bar{\beta}
P_2)$,
\item $U$ = $\tilde{U} + \lambda_1 W$,
\item $V$ = $\tilde{V} + \lambda_2 W$,
\item $X_2 = \tilde{U} + \tilde{V} +
\sqrt{\bar{\alpha}P_2}W$,
\end{enumerate}
where $\alpha, \beta \in [0,1]$, $\alpha + \bar{\alpha}=1$, $\beta
+ \bar{\beta}=1$, $\lambda_1, \lambda_2 \in [0,+\infty)$, and $W$,
$\tilde{U}$ and $\tilde{V}$ are mutually independent. The
input-output relationship of the GIC-DMS can be described by
\begin{align}
  Y_1 &= \left(\sqrt{P_1}+ \sqrt{c_{21}\bar{\alpha}P_2}\right)W +
  \sqrt{c_{21}}\tilde{U} + \sqrt{c_{21}} \tilde{V} + Z_1,\\
 Y_2 &= \tilde{U} + \tilde{V} +
 \left(\sqrt{\bar{\alpha}P_2}+\sqrt{c_{12}P_1}\right)W + Z_2.
\end{align}
To simplify the derivations, we fix the time-sharing random
variable $Q$ as a constant. The issue of how this time-sharing
random variable affects the achievable rate region is well
addressed in \cite{sason04:ifc_gaussian_inner}. In the Gaussian
case, the respective mutual information terms in
\eqref{region_ineq1} -- \eqref{region_cons_4} need be evaluated
with respect to the mappings defined by P1--P7. Since the
computation procedure to obtain $\mathcal{G}$ and the resulting
description of $\mathcal{G}$ are fairly lengthy, we relegate them
(Theorem 5) to the Appendix.

\subsubsection{Gaussian Extension of $\mathcal{R}_{\suc}$}
For illustration and comparison purpose, we next show how to
obtain the Gaussian counterpart of $\mathcal{R}_{\suc}$ in
details. Following the first step in the previous derivation, we
also map the random variables involved in \eqref{joint_pdf} to the
Gaussian ones with the following constraints:
\begin{enumerate}[M1)]
\item $W$, distributed according to $\mathcal{N}(0,1)$,
\item $X_1=\sqrt{P_1}W$,
\item $U$, distributed according to $\mathcal{N}(0,\alpha \beta
P_2)$,
\item $\tilde{V}$, distributed according to $\mathcal{N}(0,\alpha \bar{\beta}
P_2)$,
\item $V$ = $\tilde{V} + \lambda W$,
\item $X_2 = U + \tilde{V} +
\sqrt{\bar{\alpha}P_2}W$,
\end{enumerate}
where $\alpha, \beta \in [0,1]$, $\alpha + \bar{\alpha}=1$, $\beta
+ \bar{\beta}=1$, $\lambda \in [0,+\infty)$, and $W$, $U$ and
$\tilde{V}$ are mutually independent. Using the mappings defined
by M1--M6, we express the input-output relationship for the
GIC-DMS as:
\begin{align}
  Y_1 &= \left(\sqrt{P_1}+ \sqrt{c_{21}\bar{\alpha}P_2}\right)W +
  \sqrt{c_{21}}U + \sqrt{c_{21}} \tilde{V} + Z_1,\\
 Y_2 &= U + \tilde{V} +
 \left(\sqrt{\bar{\alpha}P_2}+\sqrt{c_{12}P_1}\right)W + Z_2.
\end{align}

Let $\mathcal{G}_{\suc}(\alpha, \beta)$ denote the set of all the
non-negative rate pairs $(R_1,R_2)$ such that
\begin{align}
  R_1 \leq &\frac{1}{2} \log_2 \left(1+ \frac{\left(\sqrt{P_1}+ \sqrt{c_{21}\bar{\alpha}P_2}\right)^2}{c_{21}\alpha\bar{\beta}P_2+1}\right),\label{region_suc_gaussian_ineq1}
\\
  R_2 \leq &\frac{1}{2} \log_2 (1+ \alpha\bar{\beta}P_2)
  + \min\Bigg\{
  \frac{1}{2} \log_2 \left(1+ \frac{c_{21}\alpha\beta P_2}{\left(\sqrt{P_1}+ \sqrt{c_{21}\bar{\alpha}P_2}\right)^2
  +c_{21}\alpha\bar{\beta}P_2+1}\right), \notag \\
   &\qquad \qquad \qquad \quad \qquad \qquad \frac{1}{2} \log_2\left(1+\frac{\alpha\beta P_2}{\alpha\bar{\beta}P_2+\left(\sqrt{\bar{\alpha}P_2}+\sqrt{c_{12}P_1}\right)^2+1}\right)
   \Bigg\}. \label{region_suc_gaussian_ineq2}
\end{align}
Define
\[
\mathcal{G}_{\suc} = \bigcup_{\alpha,\beta \in
[0,1]}\mathcal{G}_{\suc}(\alpha, \beta).
\]
\begin{Theorem}
The region $\mathcal{G}_{\suc}$ is an achievable rate region for
the GIC-DMS in the standard form.
\label{theorem_region_comp_Gaussian}
\end{Theorem}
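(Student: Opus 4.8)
The plan is to specialize the discrete-memoryless region $\mathcal{R}_{\suc}(p)$ of Theorem~\ref{theorem_region_comp} to the Gaussian joint distribution induced by the mappings M1--M6 with the time-sharing variable $Q$ held constant (legitimate for an achievability statement, since reinstating $Q$ can only enlarge the region, cf.\ \cite{sason04:ifc_gaussian_inner}), and then to evaluate every mutual-information term in \eqref{region_comp_ineq1_reduced}--\eqref{region_comp_cons_reduced} under that distribution. First I would check that the mapping is admissible, i.e.\ that $(W,X_1,U,V,\tilde V,X_2)$ built from mutually independent $W\sim\mathcal N(0,1)$, $U\sim\mathcal N(0,\alpha\beta P_2)$, $\tilde V\sim\mathcal N(0,\alpha\bar\beta P_2)$ via $X_1=\sqrt{P_1}W$, $V=\tilde V+\lambda W$, $X_2=U+\tilde V+\sqrt{\bar\alpha P_2}W$ factors as required by $\mathcal P^{*}$ in \eqref{joint_pdf}, and that $\mathbb{E}\|X_2\|^2=\alpha P_2+\bar\alpha P_2=P_2$ respects the power constraint.

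Next I would compute the terms one by one from the two scalar Gaussian channels $Y_1=(\sqrt{P_1}+\sqrt{c_{21}\bar\alpha P_2})W+\sqrt{c_{21}}U+\sqrt{c_{21}}\tilde V+Z_1$ and $Y_2=U+\tilde V+(\sqrt{\bar\alpha P_2}+\sqrt{c_{12}P_1})W+Z_2$. Conditioning on $U$ (and on the constant $Q$), $I(W;Y_1|UQ)$ is the capacity of a Gaussian channel with signal power $(\sqrt{P_1}+\sqrt{c_{21}\bar\alpha P_2})^2$ and noise power $c_{21}\alpha\bar\beta P_2+1$, which is exactly \eqref{region_suc_gaussian_ineq1}. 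Treating all signals other than $U$ as Gaussian noise yields $I(U;Y_1|Q)$ and $I(U;Y_2|Q)$, the two arguments of the $\min\{\cdot,\cdot\}$ in \eqref{region_suc_gaussian_ineq2}. The remaining term $I(V;Y_2|UQ)-I(V;W|Q)$ is a dirty-paper-coding quantity: after subtracting the known $U$, receiver~2 sees $\tilde V$ plus the interference $(\sqrt{\bar\alpha P_2}+\sqrt{c_{12}P_1})W$ (known noncausally at sender~2) plus unit-variance noise $Z_2$, so by Costa's result \cite{Costa83:dirty_paper} choosing $\lambda$ equal to the MMSE coefficient $\lambda^{*}=\alpha\bar\beta P_2(\sqrt{\bar\alpha P_2}+\sqrt{c_{12}P_1})/(\alpha\bar\beta P_2+1)$ makes this difference equal to the interference-free rate $\frac12\log_2(1+\alpha\bar\beta P_2)$, the leading term of \eqref{region_suc_gaussian_ineq2}. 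Substituting all of this into \eqref{region_comp_ineq1_reduced}--\eqref{region_comp_ineq2_reduced} and taking the union over $\alpha,\beta\in[0,1]$ produces $\mathcal{G}_{\suc}$.

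The main obstacle is the feasibility constraint \eqref{region_comp_cons_reduced}, $0\le I(V;Y_1|UQ)-I(V;W|Q)$, which Theorem~\ref{theorem_region_comp} requires but which does not appear in the statement of $\mathcal{G}_{\suc}$: one must verify it is automatically satisfied by the Gaussian mapping with $\lambda=\lambda^{*}$. Concretely, writing $\tilde V=V-\lambda^{*}W$ turns $Y_1\mid U$ into a jointly Gaussian observation of the correlated pair $(V,W)$, and the constraint collapses to a polynomial inequality in $P_1,P_2,c_{12},c_{21},\alpha,\beta$; I expect the cleanest route is to use that the cross-gain $\sqrt{c_{21}}$ multiplies $X_2$ at receiver~1 and to show the inequality holds on the whole parameter range (it is loosest precisely in the high-interference-gain regime of interest). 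Should it fail for some parameters, the fallback is to shrink $\lambda$ below $\lambda^{*}$ until \eqref{region_comp_cons_reduced} binds, which only reduces the dirty-paper gain; but I anticipate the Costa-optimal choice already works, so $\mathcal{G}_{\suc}$ stands as written.
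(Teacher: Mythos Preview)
Your plan is exactly the paper's: specialize $\mathcal{R}_{\suc}$ to the mapping M1--M6 with $Q$ constant, read off $I(W;Y_1|U)$, $I(U;Y_1)$, $I(U;Y_2)$ as Gaussian SNR expressions, and handle $I(V;Y_2|U)-I(V;W)$ by the Costa optimization (the paper carries this last step out explicitly via differential entropies and maximizes over $\lambda$, arriving at the same $\lambda^{*}$ and value $\tfrac12\log_2(1+\alpha\bar\beta P_2)$ you quote). Your one departure is the attention you pay to \eqref{region_comp_cons_reduced}: the paper's proof never mentions it, and in fact the $Y_1$ printed there is a typo for $Y_2$---trace the Fourier--Motzkin step in the proof of Theorem~\ref{theorem_region_comp}, where the only surviving non-negativity condition (from $R_{22}\ge 0$) is $I(V;Y_2|UQ)-I(V;W|Q)\ge 0$---and with $Y_2$ the left side is precisely the dirty-paper rate you just computed, hence automatically non-negative, so your ``main obstacle'' and its fallback disappear.
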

\begin{proof}
It suffices to prove that $\mathcal{G}_{\suc}(\alpha, \beta)$ is
achievable for any given $\alpha, \beta \in [0,1]$. Since
$\mathcal{G}_{\suc}$ is extended from $\mathcal{R}_{\suc}$, we
need compute the mutual information terms in
\eqref{region_comp_ineq1_reduced} and
\eqref{region_comp_ineq2_reduced}. The righthand side of
\eqref{region_suc_gaussian_ineq1} can be readily obtained through
a straightforward computation of $I(W;Y_1|UQ)$ in
\eqref{region_comp_ineq1_reduced}. Recall that $Q$ is a constant.
It is also fairly straightforward to obtain the two terms within
the minimum operator in \eqref{region_suc_gaussian_ineq2} through
computing $I(U;Y_1|Q)$ and $I(U;Y_2|Q)$ in
\eqref{region_comp_ineq2_reduced}. We next evaluate the only
remaining term $I(V;Y_2|UQ)-I(V;W|Q)$ for a constant $Q$. Defining
$\tilde{Y}_2 =  \tilde{V} +
 \left(\sqrt{\bar{\alpha}P_2}+\sqrt{c_{12}P_1}\right)W + Z_2$, we
 have
  \begin{align}
   I(V;Y_2|U)-I(V;W) &= h(Y_2|U) - h(Y_2|UV) -
   I(V;W) \notag \\
   &= h(\tilde{Y}_2) - h(\tilde{Y}_2|V) -
   I(V;W) \nonumber \\
    &= h(\tilde{Y}_2) + h(V) - h(\tilde{Y}_2V) -
   I(V;W). \label{term_evaluation_G_suc_2}
  \end{align}
With $V = \tilde{V}+ \lambda W$, we evaluate
\eqref{term_evaluation_G_suc_2} as
\begin{align}
I(V&;Y_2|U)-I(V;W) \nonumber\\
=&\frac{1}{2} \log_2 \left(2\pi e \left(\alpha \bar{\beta} P_2 +
\left(\sqrt{\bar{\alpha}P_2}+\sqrt{c_{12}P_1}\right)^2+1\right)\right)
+ \frac{1}{2} \log_2 (2\pi e (\alpha \bar{\beta} P_2 + \lambda^2
))\nonumber \\
&- \frac{1}{2} \log_2 \Bigg((2\pi e)^2 \Bigg[\left(\alpha
\bar{\beta} P_2 + \left(\sqrt{\bar{\alpha}P_2}+
\sqrt{c_{12}P_1}\right)^2+1\right)(\alpha \bar{\beta} P_2 +
\lambda^2
P_1)\nonumber\\
&~~~~~~~~~~~~~~-\left(\alpha \bar{\beta} P_2+ \lambda
\left(\sqrt{\bar{\alpha}P_2}+\sqrt{c_{12}P_1}\right)
\right)^2\Bigg]\Bigg)- \frac{1}{2} \log_2 \left(1+
\frac{\lambda^2}{\alpha \bar{\beta} P_2}\right).
\end{align}
It is easy to find that when
\begin{align}
\lambda =\frac{\alpha \bar{\beta}P_2
\left(\sqrt{\bar{\alpha}P_2}+\sqrt{c_{12}P_1}\right)}{\alpha
\bar{\beta}P_2+1},
\end{align}
the term $I(V;Y|U)-I(V;W)$ is maximized, and the maximum value is
\begin{align}
\max[I(V;Y_2|U)-I(V;W)] = \frac{1}{2} \log_2 (1+
\alpha\bar{\beta}P_2).
\end{align}
This is in parallel with the result in \cite{Costa83:dirty_paper}.

Therefore, the rate region $\mathcal{G}_{\suc}(\alpha, \beta)$ is
achievable for any pair $\alpha, \beta \in [0,1]$, and the theorem
follows.
\end{proof}

In the following, we obtain two corollaries by setting $\beta = 0$
and $\beta = 1$ in Theorem \ref{theorem_region_comp_Gaussian},
respectively.

\begin{Corollary}
  The rate region $\mathcal{G}_{\sptfirst}$ is an achievable rate region
  for the GIC-DMS in the standard form with $\mathcal{G}_{\sptfirst}: = \bigcup_{\alpha \in
  [0,1]}\mathcal{G}_{\suc}(\alpha, 0)$, i.e., $\mathcal{G}_{\sptfirst}$ is
  the union of the sets of non-negative rate pairs $(R_1,R_2)$ satisfying
  \begin{align*}
  R_1 &\leq \frac{1}{2} \log_2 \left(1+ \frac{\left(\sqrt{P_1}+
  \sqrt{c_{21}\bar{\alpha}P_2}\right)^2}{c_{21}\alpha
  P_2+1}\right),
  \\
  R_2 &\leq \frac{1}{2} \log_2 (1+ \alpha P_2),
  \end{align*}
  over all $\alpha \in
  [0,1]$. \label{corollary_sp1_Gaussian}
\end{Corollary}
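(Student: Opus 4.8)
The plan is to obtain $\mathcal{G}_{\sptfirst}$ simply as the restriction of the region $\mathcal{G}_{\suc}$ of Theorem \ref{theorem_region_comp_Gaussian} to the parameter slice $\beta = 0$, and then simplify the two resulting bounds. Since $\mathcal{G}_{\suc} = \bigcup_{\alpha,\beta\in[0,1]}\mathcal{G}_{\suc}(\alpha,\beta)$ is already known to be achievable, the set $\bigcup_{\alpha\in[0,1]}\mathcal{G}_{\suc}(\alpha,0)$ is a subunion of an achievable region, hence itself achievable; no new random coding argument is needed. It therefore remains only to check that $\mathcal{G}_{\suc}(\alpha,0)$ coincides with the rate set described by the two displayed inequalities, and to take the union over $\alpha\in[0,1]$.

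First I would substitute $\beta = 0$, hence $\bar\beta = 1$, into \eqref{region_suc_gaussian_ineq1}: the term $c_{21}\alpha\bar\beta P_2$ becomes $c_{21}\alpha P_2$, which immediately reproduces the stated bound on $R_1$. Next I would substitute $\beta = 0$ into \eqref{region_suc_gaussian_ineq2}. The leading term $\tfrac12\log_2(1+\alpha\bar\beta P_2)$ becomes $\tfrac12\log_2(1+\alpha P_2)$. Inside the minimum, both arguments carry a factor $\alpha\beta P_2$ in the numerator, which vanishes at $\beta = 0$; hence each of the two logarithms equals $\tfrac12\log_2 1 = 0$, so $\min\{0,0\} = 0$, and the bound on $R_2$ collapses to $R_2 \leq \tfrac12\log_2(1+\alpha P_2)$, exactly as claimed. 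Forming the union over $\alpha\in[0,1]$ then yields the asserted description of $\mathcal{G}_{\sptfirst}$.

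I do not expect any real obstacle here; the argument is a direct specialization, and the only point worth verifying is that $\beta = 0$ is an admissible choice in the Gaussian mapping M1--M6. This holds because $\beta$ ranges over the closed interval $[0,1]$, and $\beta = 0$ merely forces the auxiliary variable $U$, whose variance is $\alpha\beta P_2 = 0$, to be deterministic --- the Gaussian analogue of fixing $U$ as a constant in the discrete memoryless Corollary \ref{corollary_sp1}. As a consistency check one may note that in this degenerate case sender 2 splits its power into a fraction $\alpha$ carrying a dirty-paper-coded stream for receiver 2 and a fraction $\bar\alpha$ cooperatively reinforcing $W$, with no common sub-message decoded at receiver 1, which matches the structure underlying $\mathcal{R}_{\sptfirst}$.
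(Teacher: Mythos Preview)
Your proposal is correct and follows exactly the paper's own approach: the paper simply states that Corollaries \ref{corollary_sp1_Gaussian} and \ref{corollary_sp2_Gaussian} are obtained by setting $\beta=0$ and $\beta=1$, respectively, in Theorem \ref{theorem_region_comp_Gaussian}, without further elaboration. Your explicit verification of the substitution and the remark linking $\beta=0$ to the degenerate $U$ in Corollary \ref{corollary_sp1} are accurate and, if anything, more detailed than what the paper provides.
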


\begin{Corollary}
  The rate region $\mathcal{G}_{\spttwo}$ is an achievable rate region
  for the GIC-DMS in the standard form with $\mathcal{G}_{\spttwo}: = \bigcup_{\alpha \in
  [0,1]}\mathcal{G}_{\suc}(\alpha, 1)$, i.e., $\mathcal{G}_{\spttwo}$ is
  the union of the sets of non-negative rate pairs $(R_1,R_2)$ satisfying
  \begin{align*}
  R_1 \leq &\frac{1}{2} \log_2 \left(1+ \left(\sqrt{P_1}+
  \sqrt{c_{21}\bar{\alpha}P_2}\right)^2\right),\\
  R_2 \leq &\min\left\{\frac{1}{2} \log_2 \left(1+ \frac{c_{21}\alpha P_2}{\left(\sqrt{P_1}+
  \sqrt{c_{21}\bar{\alpha}P_2}\right)^2+1}\right),
   \frac{1}{2} \log_2\left(1+\frac{\alpha P_2}{\left(\sqrt{\bar{\alpha}P_2}+\sqrt{c_{12}P_1}\right)^2+1}\right)
   \right\},
  \end{align*}
  over all $\alpha \in
  [0,1]$. \label{corollary_sp2_Gaussian}
\end{Corollary}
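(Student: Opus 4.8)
The plan is to derive $\mathcal{G}_{\spttwo}$ as a one-parameter specialization of Theorem~\ref{theorem_region_comp_Gaussian}, namely by setting $\beta=1$ throughout the description of $\mathcal{G}_{\suc}(\alpha,\beta)$ and then taking the union over $\alpha\in[0,1]$. The conceptual point to establish first is that $\beta=1$ in the Gaussian mapping M1--M6 is exactly the Gaussian instance of the degeneration ``$V$ constant'' that yields Corollary~\ref{corollary_sp2} from Theorem~\ref{theorem_region_comp} in the discrete-memoryless case: with $\beta=1$ we have $\bar{\beta}=0$, so the dirty-paper component $\tilde{V}$, of variance $\alpha\bar{\beta}P_2$, is degenerate, and the maximizing coefficient $\lambda=\alpha\bar{\beta}P_2(\sqrt{\bar{\alpha}P_2}+\sqrt{c_{12}P_1})/(\alpha\bar{\beta}P_2+1)$ found in the proof of Theorem~\ref{theorem_region_comp_Gaussian} also vanishes, so $V=\tilde{V}+\lambda W$ reduces to a constant. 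Consequently, achievability of the resulting region is inherited directly from Theorem~\ref{theorem_region_comp_Gaussian} (equivalently, it is the Gaussian extension of $\mathcal{R}_{\spttwo}$ under the map M1--M6 with $\tilde{V}$ removed).

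The remaining work is bookkeeping. Substituting $\bar{\beta}=0$ into \eqref{region_suc_gaussian_ineq1} collapses the denominator $c_{21}\alpha\bar{\beta}P_2+1$ to $1$, which gives the stated bound on $R_1$. Substituting $\bar{\beta}=0$ into \eqref{region_suc_gaussian_ineq2}: the leading term $\frac{1}{2}\log_2(1+\alpha\bar{\beta}P_2)$ becomes $0$ --- reflecting that the dirty-paper increment $I(V;Y_2|UQ)-I(V;W|Q)$ vanishes when $V$ is constant --- while inside the minimum operator the replacements $c_{21}\alpha\beta P_2\mapsto c_{21}\alpha P_2$ and $\alpha\beta P_2\mapsto\alpha P_2$ hold and every occurrence of $\alpha\bar{\beta}P_2$ drops out, leaving precisely the two logarithms in the statement. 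Taking $\bigcup_{\alpha\in[0,1]}$ then reproduces the claimed description of $\mathcal{G}_{\spttwo}$.

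I do not anticipate a genuine obstacle: the argument is pure substitution once the identification of ``$\beta=1$'' with ``$V$ constant'' is made. The only items worth a sentence of care are (i) checking that $\lambda=0$ is indeed the optimal and only sensible choice when $\bar{\beta}=0$, so that no residual optimization over $\lambda$ survives, and (ii) the endpoint $\alpha=0$, where $\mathcal{G}_{\suc}(0,\beta)$ does not depend on $\beta$ and the region degenerates to the fully-cooperative rate point, consistently with the displayed formulas.
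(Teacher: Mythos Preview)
Your proposal is correct and follows exactly the paper's approach: the paper obtains Corollary~\ref{corollary_sp2_Gaussian} simply by setting $\beta=1$ in Theorem~\ref{theorem_region_comp_Gaussian} and taking the union over $\alpha\in[0,1]$, which is precisely the substitution you carry out. Your additional remarks --- the identification of $\beta=1$ with the ``$V$ constant'' degeneration of Corollary~\ref{corollary_sp2}, the check that $\lambda=0$ is forced, and the endpoint $\alpha=0$ --- are sound elaborations but go beyond what the paper itself records.
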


\begin{Remark}
Corollaries \ref{corollary_sp1_Gaussian} and
\ref{corollary_sp2_Gaussian} correspond the Gaussian extensions of
Corollaries \ref{corollary_sp1} and \ref{corollary_sp2}
respectively. Particularly, the rate region depicted by Corollary
\ref{corollary_sp1_Gaussian} is the same as the rate regions given
in \cite[Theorem 4.1]{jovicic06:cog_ICDMS} and \cite[Theorem
3.5]{wuwei06_icdms}. It has been proven in both
\cite{jovicic06:cog_ICDMS} and \cite{wuwei06_icdms} that the rate
region $\mathcal{G}_{\sptfirst}$ is indeed the capacity region for
the GIC-DMS in the low-interference-gain regime, i.e., $c_{21}
\leq 1$.

In addition,  the set of achievable rate pairs given in
\cite[Lemma 4.2]{jovicic06:cog_ICDMS} is contained in the region
$\mathcal{G}_{\spttwo}$ as a subset.
\end{Remark}

\begin{figure}[t]
\includegraphics{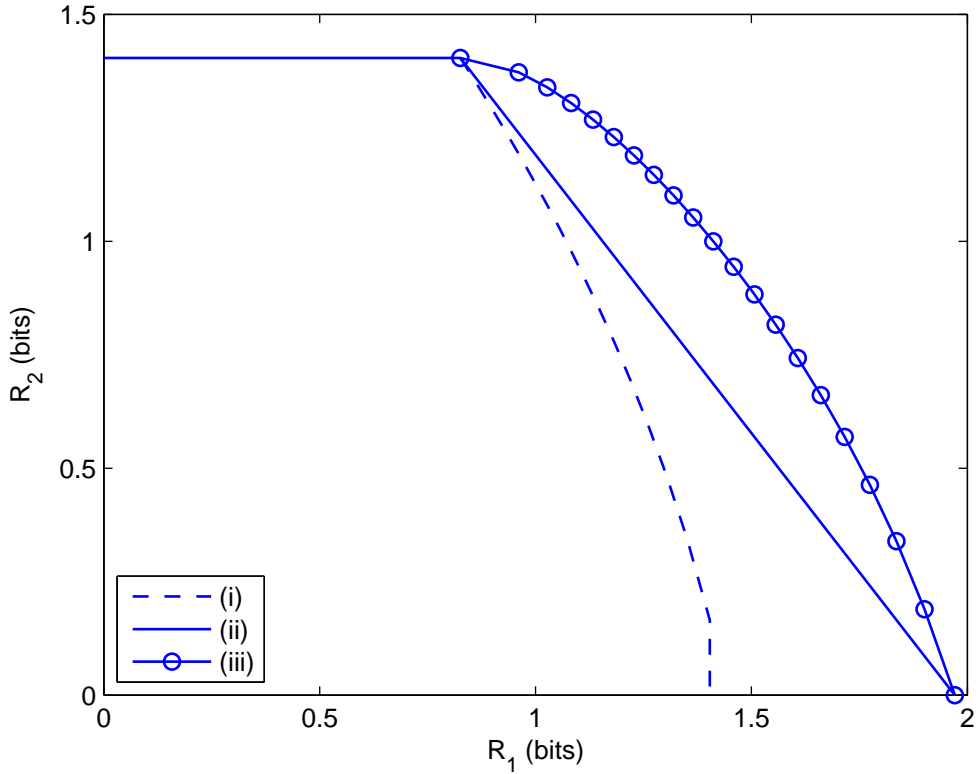}
\caption{$P_1 = P_2= 6$, $c_{21} = 0.3$, $c_{12} = 0$. (i) gives
the rate region in Theorem 1 of \cite{Tarokh06:ic_dms_cog}; (ii)
gives the rate region in Corollary 2 of
\cite{Tarokh06:ic_dms_cog}; (iii) gives the rate region in
Corollary 3 (equivalently, Theorem 4.1 of
\cite{jovicic06:cog_ICDMS} and Theorem 3.5 of
\cite{wuwei06_icdms}).} \label{fig_compare_tarokh}
\end{figure}
\subsection{Numerical Examples}
We next provide several numerical examples to illustrate
improvements of our achievable rate regions over the previously
known results in
\cite{Tarokh06:ic_dms_cog,jovicic06:cog_ICDMS,wuwei06_icdms}.
Denote the achievable rate regions obtained in \cite[Theorem
1]{Tarokh06:ic_dms_cog} and \cite[Corollary2]{Tarokh06:ic_dms_cog}
by $\mathcal{G}_{\text{dmt1}}$ and $\mathcal{G}_{\text{dmt2}}$,
respectively.

\subsubsection{Comparing with Rate Regions in
\cite{Tarokh06:ic_dms_cog}} Fig. \ref{fig_compare_tarokh} compares
the rate regions $\mathcal{G}_{\text{dmt1}}$,
$\mathcal{G}_{\text{dmt2}}$, and $\mathcal{G}_{\sptfirst}$ for an
extreme case in which receiver 2 does not experience any
interference from sender 1, i.e., $c_{12} = 0$.  As can be seen
from Fig. \ref{fig_compare_tarokh}, the rate region
$\mathcal{G}_{\sptfirst}$ strictly includes
$\mathcal{G}_{\text{dmt1}}$, as well as
$\mathcal{G}_{\text{dmt2}}$ obtained through time-sharing between
$\mathcal{G}_{\text{dmt1}}$ and a fully-cooperative rate point.
The coding scheme used to establish $\mathcal{G}_{\text{dmt1}}$
incurs certain rate loss due to the fact that sender 2 does not
use its power to help the sender 1's transmissions even though it
has complete and non-causal knowledge about the message being
transmitted by sender 1. In contrast, our proposed coding scheme
allows sender 2 to use superposition coding to help sender 1, and
thus yields an improved rate region.

In Fig. \ref{fig_compare_tarokh2}, we consider another case in
which the transmit power of sender 1 is set to zero and $c_{21}
\leq 1$. From the figure, we observe that the rate region
$\mathcal{G}_{\text{dmt2}}$ is strictly smaller than
$\mathcal{G}_{\sptfirst}$. Note that in this case, the GIC-DMS
becomes a Gaussian degraded broadcast channel. According to
\cite{cover_IT_book}, the optimal coding scheme for this case is:
sender 2 uses a portion of its power to transmit the codeword
conveying $w_1$, and uses the remaining power to transmit the
codeword conveying $w_2$, which is encoded by using the
dirty-paper coding \cite{yuwei01:bc_dirty_paper}. It is easy to
verify that this scheme is a special case of the coding scheme
developed in Theorem \ref{theorem_region_most}.

\begin{figure}[h]
\includegraphics{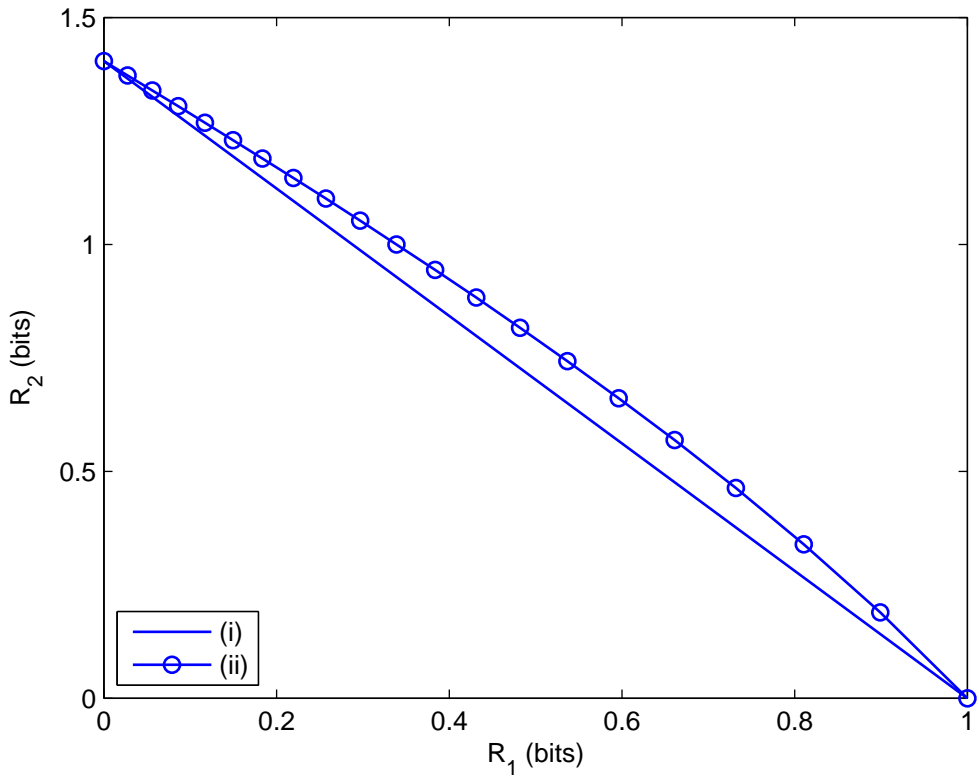}
\caption{$P_1 =0$, $ P_2= 6$, $c_{21} = 0.5$. (i) gives the rate
region in Corollary 2 of \cite{Tarokh06:ic_dms_cog}; (ii) gives the
rate region in Corollary 3 (equivalently, Theorem 4.1 of
\cite{jovicic06:cog_ICDMS} and Theorem 3.5 of
\cite{wuwei06_icdms}).} \label{fig_compare_tarokh2}
\end{figure}

\subsubsection{Comparing with Rate Regions in
\cite{jovicic06:cog_ICDMS,wuwei06_icdms}}

\begin{figure}[h]
\includegraphics{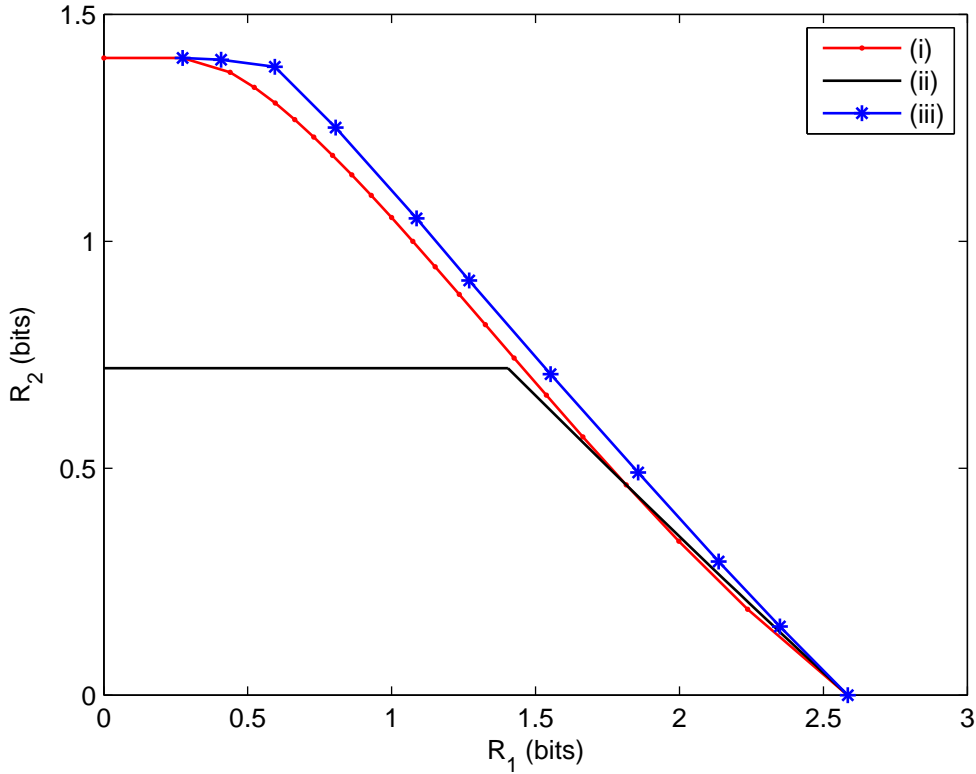}
\caption{$P_1 = P_2= 6$, $c_{21} = 2$, $c_{12} = 0.3$. (i) gives
the rate region in Corollary 3 (equivalently, Theorem 4.1 of
\cite{jovicic06:cog_ICDMS} and Theorem 3.5 of
\cite{wuwei06_icdms}); (ii) gives the achievable rate region in
Corollary 4; (iii) gives the achievable rate region in Theorem 5.}
\label{fig_compare_big_region_2}
\end{figure}

As mentioned earlier, the rate region $\mathcal{G}_{\sptfirst}$, a
subregion of $\mathcal{G}$, is the same as the one given in
\cite[Theorem 4.1]{jovicic06:cog_ICDMS} and the one given in
\cite[Theorem 3.5]{wuwei06_icdms}, which is indeed the capacity
region for GIC-DMS in the low-interference-gain regime. In Figs.
\ref{fig_compare_big_region_2} and \ref{fig_compare_big_region_6},
we compare $\mathcal{G}$ with $\mathcal{G}_{\sptfirst}$ and
$\mathcal{G}_{\spttwo}$ in the high-interference-gain regime,
i.e., $c_{21}> 1$. As can be seen from the figures, the rate
region $\mathcal{G}$ strictly includes both
$\mathcal{G}_{\sptfirst}$ and $\mathcal{G}_{\spttwo}$ in this
case. Comparing Fig. \ref{fig_compare_big_region_2} with Fig.
\ref{fig_compare_big_region_6}, we observe that the improvement of
the rate region $\mathcal{G}$ over $\mathcal{G}_{\sptfirst}$
becomes more pronounced as the link gain $c_{21}$ increases. The
improvement is mainly because our coding scheme allows receiver 1
to decode partial information from sender 2, and thus reduces the
effective interference experienced by receiver 1. In addition, it
can be seen from the figures that in the high-interference-gain
regime, $\mathcal{G}_{\sptfirst}$ is not convex and thus is only
suboptimal.

\begin{figure}[h]
\includegraphics{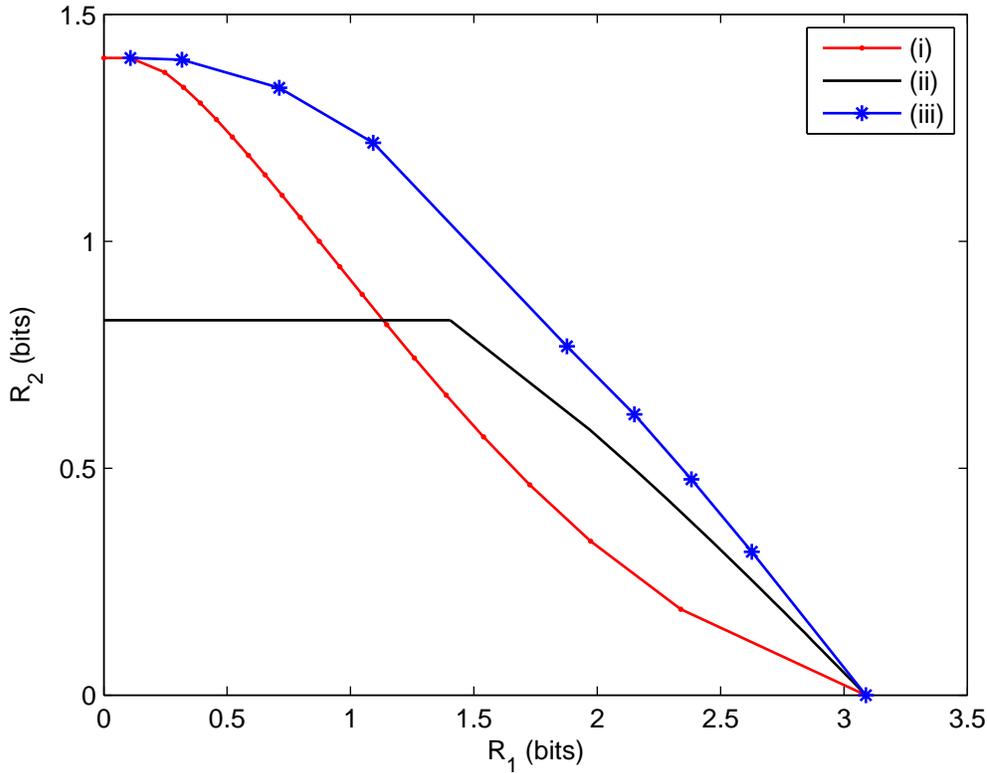}
\caption{$P_1 = P_2= 6$, $c_{21} = 6$, $c_{12} = 0.3$. (i) gives the
rate region in Corollary 3 (equivalently, Theorem 4.1 of
\cite{jovicic06:cog_ICDMS} and Theorem 3.5 of \cite{wuwei06_icdms});
(ii) gives the achievable rate region in Corollary 4; (iii) gives
the achievable rate region in Theorem 5.}
\label{fig_compare_big_region_6}
\end{figure}

\section{Conclusions}\label{section_conclusion}
In this paper, we have investigated the IC-DMS from an information
theoretic perspective. We have developed a coding scheme that
combines the advantages of cooperative coding, collaborative
coding and Gel'fand-Pinsker coding. With the coding scheme, we
have derived a new achievable rate region for such a channel,
which not only includes existing results as special cases, but
also exceeds them in the high-interference-gain regime. However,
we are not able to establish a converse for the derived achievable
rate region, because the achievable result is closely related to
the achievable results for the interference channel and the
broadcast channel, for which there is no converse available in
general.

\section*{Appendix\\ An Achievable Rate Region for the GIC-DMS}
In this appendix, we show how to extend $\mathcal{R}$, the
achievable rate region for the discrete memoryless IC-DMS, to its
Gaussian counterpart, $\mathcal{G}$. Note that the mappings M1--M6
of the auxiliary random variables are described in Section
\ref{section_region_Gaussian}. We first compute the following two
covariance matrices:
\begin{align*}
  {\mathbf{\Sigma}}_{WUY_1} &=\left( \begin{array}{ccc}
\mu_{11} & \mu_{12} & \mu_{13} \\
\mu_{21} & \mu_{22} & \mu_{23}\\
\mu_{31} & \mu_{32} & \mu_{33}
\end{array}  \right):=
\left( \begin{array}{ccc}
E\{W^2\} & E\{WU\} & E\{WY_1\} \\
E\{WU\} & E\{U^2\} & E\{UY_1\}\\
E\{WY_1\} & E\{UY_1\} & E\{Y_1^2\}
\end{array}  \right)
\\
&=\left( \begin{array}{ccc} P_1 & \lambda_1P_1 &
\eta_1 \sqrt{P_1}\\
\lambda_1P_1 & \alpha\beta P_2+\lambda_1^2P_1 &
\lambda_1 \eta_1\sqrt{P_1}+\sqrt{c_{21}}\alpha\beta P_2\\
\eta_1\sqrt{P_1} & \lambda_1
\eta_1\sqrt{P_1}+\sqrt{c_{21}}\alpha\beta P_2 &
\eta_1^2+c_{21}\alpha P_2+1
\end{array}  \right),
\end{align*}
\begin{align*}
{\mathbf{\Sigma}}_{UVY_2} &= \left( \begin{array}{ccc}
\nu_{11} & \nu_{12} & \nu_{13} \\
\nu_{21} & \nu_{22} & \nu_{23}\\
\nu_{31} & \nu_{32} & \nu_{33}
\end{array}  \right):=
\left( \begin{array}{ccc}
E\{U^2\} & E\{UV\} & E\{UY_2\} \\
E\{UV\} & E\{V^2\} & E\{VY_2\}\\
E\{UY_2\} & E\{VY_2\} & E\{Y_2^2\}
\end{array}  \right)
\\
&=\left( \begin{array}{ccc} \alpha\beta P_2 + \lambda_1^2P_1 &
\lambda_1\lambda_2 P_1 &
\alpha\beta P_2 + \lambda_1\eta_2\sqrt{P_1} \\
\lambda_1\lambda_2 P_1 & \alpha\bar{\beta}P_2+\lambda_2^2P_1 &
\alpha\bar{\beta}P_2+\lambda_2\eta_2\sqrt{P_1}\\
\alpha\beta P_2 + \lambda_1\eta_2\sqrt{P_1} &
\alpha\bar{\beta}P_2+\lambda_2\eta_2\sqrt{P_1} & \alpha P_2 +
\eta_2^2 + 1
\end{array}  \right),
\end{align*}
where
\begin{align*}
\eta_1 &= \sqrt{P_1}+\sqrt{c_{21}\bar{\alpha}P_2},\\
\eta_2 &= \sqrt{\bar{\alpha} P_2}+\sqrt{c_{12}P_1},
\end{align*} and $E\{\cdot\}$ denotes the expectation of a random variable.

Define $\Gamma(x) = \log_2(x)/2$, and $\xi = \log_2(2\pi e)/2$. We
express the respective differential entropy terms as:
\begin{align*}
&h_a = h(W) = \xi+\Gamma(\mu_{11}),\\
&h_b = h(UY_1) = 2\xi+\Gamma\left(\left| \begin{array}{cc}
\mu_{22} &
\mu_{23}\\
\mu_{32} & \mu_{33}
\end{array}\right|\right),\\
&h_c = h(WUY_1) = 3\xi+\Gamma\left(\left| \begin{array}{ccc}
\mu_{11} & \mu_{12} & \mu_{13} \\
\mu_{21} & \mu_{22} & \mu_{23}\\
\mu_{31} & \mu_{32} & \mu_{33}
\end{array}  \right|\right),
\\
&h_d = h(UV) = 2\xi+\Gamma\left(\left| \begin{array}{cc} \nu_{11}
&
\nu_{12}\\
\nu_{21} & \nu_{22}
\end{array}\right|\right),\\
&h_e = h(Y_2) = \xi+\Gamma(\nu_{33}),
\\
&h_f = h(UVY_2) = 3\xi+\Gamma\left(\left| \begin{array}{ccc}
\nu_{11} & \nu_{12} & \nu_{13} \\
\nu_{21} & \nu_{22} & \nu_{23}\\
\nu_{31} & \nu_{32} & \nu_{33}
\end{array}  \right|\right),
\\
&h_g = h(WU) = 2\xi+\Gamma\left(\left| \begin{array}{cc} \mu_{11}
&
\mu_{12}\\
\mu_{21} & \mu_{22}
\end{array}\right|\right),\\
&h_h = h(Y_1) = \xi+\Gamma(\mu_{33}),\\
&h_i = h(V) = \xi+\Gamma(\nu_{22}),\\
&h_j = h(UY_2) = 2\xi+\Gamma\left(\left| \begin{array}{cc}
\mu_{11} &
\mu_{13}\\
\mu_{31} & \mu_{33}
\end{array}\right|\right),\\
&h_k = h(U) = \xi+\Gamma(\nu_{11}),\\
&h_l = h(VY_2) = 2\xi+\Gamma\left(\left| \begin{array}{cc}
\mu_{22} &
\mu_{23}\\
\mu_{32} & \mu_{33}
\end{array}\right|\right),
\end{align*}
where $|\cdot|$ denotes the determinant of a matrix.

The mutual information terms in
\eqref{region_ineq1}--\eqref{region_cons_4} are then computed as:
\begin{align*}
&I_1 = h_a + h_b - h_c,\\
&I_2 = h_d + h_e - h_f,\\
&I_3 = \Gamma(1+\frac{\lambda_1^2P_1}{\alpha\beta P_2}),\\
&I_4 = \Gamma(1+\frac{\lambda_2^2P_1}{\alpha\bar{\beta} P_2}),\\
&I_5 = h_g + h_h - h_c,
\\
&I_6 = h_i + h_j - h_f,
\\
&I_7 = h_k + h_l - h_f.
\end{align*}

Let $\mathcal{G}(\alpha,\beta,\lambda_1, \lambda_2)$ denote the
set of all rate pairs $(R_1, R_2)$ such that the following
inequalities are satisfied:
\begin{align}
&R_1 \leq I_1, \label{G_start}\\
&R_2 \leq I_2 - I_3 - I_4, \\
&R_1 + R_2 \leq I_5 + I_6 - I_3 - I_4;
\\
&0 \leq I_5 - I_3,\\
&0 \leq I_7 - I_3,\\
&0 \leq I_6 - I_4,\\
&0 \leq I_2 - I_3 - I_4. \label{G_end}
\end{align}
for given $\alpha,\beta \in [0,1]$ and $\lambda_1, \lambda_2 \in
[0, +\infty)$. Note that \eqref{G_start}--\eqref{G_end} are
directly extended from
\eqref{region_ineq1}--\eqref{region_cons_4}.

\begin{Theorem}
The rate region $\mathcal{G}$ is achievable for the GIC-DMS in the
standard form with
\begin{align*}
\mathcal{G} = \bigcup_{\alpha,\beta \in [0,1];\lambda_1, \lambda_2
\in [0, +\infty)}\mathcal{G}(\alpha,\beta,\lambda_1,\lambda_2).
\end{align*}
\end{Theorem}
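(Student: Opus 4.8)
The plan is to realize $\mathcal{G}$ as the Gaussian specialization of the discrete memoryless region $\mathcal{R}$ established in Theorem \ref{theorem_region_most}. Since $\mathcal{G}$ is the union of the sets $\mathcal{G}(\alpha,\beta,\lambda_1,\lambda_2)$, it suffices to show that each such set is achievable for fixed $\alpha,\beta\in[0,1]$ and $\lambda_1,\lambda_2\in[0,+\infty)$. First I would fix the time-sharing variable $Q$ to a constant and instantiate $W,\tilde U,\tilde V,U,V,X_1,X_2$ through the jointly Gaussian mappings P1--P7 of Section \ref{section_region_Gaussian}. Two admissibility checks are then needed. The first is that the induced joint law lies in $\mathcal{P}$, i.e., it factors as in \eqref{joint_pdf_most}; this holds because $W,\tilde U,\tilde V$ are mutually independent, $(U,\tilde U)$ is a deterministic function of $(\tilde U,W)$ and $(V,\tilde V)$ of $(\tilde V,W)$ --- so $(U,\tilde U)$ and $(V,\tilde V)$ are conditionally independent given $W$ --- while $X_1$ and $X_2$ are deterministic functions of $W$ and of $(\tilde U,\tilde V,W)$ respectively. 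The second is that the average-power constraints are met, which follows from $E\{X_1^2\}=P_1$ and, by independence, $E\{X_2^2\}=\alpha\beta P_2+\alpha\bar\beta P_2+\bar\alpha P_2=P_2$; the per-codeword form of the constraint then follows in the usual way from concentration of the empirical power of the Gaussian codewords, discarding the vanishing fraction of atypical ones.

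Second, I would bridge the gap between the finite-alphabet, unconstrained setting of Theorem \ref{theorem_region_most} and the continuous, average-power-constrained Gaussian channel. This is the one genuinely non-routine step. The coding argument behind Theorem \ref{theorem_region_most} extends with only cosmetic changes to a per-letter input cost constraint (generate codewords over the cost-typical set), and it extends to well-behaved continuous alphabets either directly or through the standard quantization procedure: discretize the Gaussian inputs and the induced auxiliary variables on progressively finer grids, apply the cost-constrained form of Theorem \ref{theorem_region_most} to each discretized channel, and pass to the limit using continuity of the relevant mutual-information functionals --- all of which are finite here, since $Y_1$ and $Y_2$ have non-degenerate conditional Gaussian densities. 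The upshot is that every rate pair satisfying \eqref{region_ineq1}--\eqref{region_cons_4} evaluated under P1--P7 is achievable for the GIC-DMS in the standard form.

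Third, it remains to evaluate the seven information quantities in \eqref{region_ineq1}--\eqref{region_cons_4} under P1--P7. Because $(W,U,Y_1)$ and $(U,V,Y_2)$ are each jointly Gaussian, I would assemble the covariance matrices $\mathbf{\Sigma}_{WUY_1}$ and $\mathbf{\Sigma}_{UVY_2}$ displayed above, write every mutual-information term as a difference of differential entropies of jointly Gaussian vectors --- each of the form $k\xi+\Gamma(\det(\cdot))$ for the appropriate covariance submatrix --- and collect them into $I_1,\dots,I_7$ exactly as listed, with $I_3$ and $I_4$ collapsing to the closed forms $\Gamma(1+\lambda_1^2 P_1/(\alpha\beta P_2))$ and $\Gamma(1+\lambda_2^2 P_1/(\alpha\bar\beta P_2))$ via $I(U;W)=h(U)-h(U|W)=h(U)-h(\tilde U)$ and its counterpart. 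Substituting $I_1,\dots,I_7$ back into \eqref{region_ineq1}--\eqref{region_cons_4} reproduces precisely the inequalities \eqref{G_start}--\eqref{G_end} defining $\mathcal{G}(\alpha,\beta,\lambda_1,\lambda_2)$, and taking the union over all admissible $\alpha,\beta,\lambda_1,\lambda_2$ gives $\mathcal{G}\subseteq\mathcal{C}$. I would not grind through the determinant algebra here; as indicated, the substantive obstacle is making the discrete-to-Gaussian extension rigorous, after which the remaining computation is routine bookkeeping with $3\times 3$ covariance matrices.
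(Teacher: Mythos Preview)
Your proposal is correct and follows essentially the same route as the paper: specialize Theorem~\ref{theorem_region_most} to the Gaussian channel via the mappings P1--P7, assemble the two $3\times 3$ covariance matrices $\mathbf{\Sigma}_{WUY_1}$ and $\mathbf{\Sigma}_{UVY_2}$, and read off $I_1,\dots,I_7$ as differences of Gaussian differential entropies to recover \eqref{G_start}--\eqref{G_end}. The paper presents this as a computation in the Appendix and simply states the theorem, treating the passage from discrete to continuous alphabets and the verification of the power constraints as routine; you are more careful than the paper in spelling out the factorization check for membership in~$\mathcal{P}$, the power-budget arithmetic, and the quantization/limiting argument, but none of this departs from the paper's (implicit) line of reasoning.
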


\newpage
\bibliographystyle{IEEETran}

\bibliography{IEEEAbrv,D:/MyPapers/mybib/mybib}

\end{document}